\documentclass[11pt]{elsarticle}
\leftmargin=2cm
\topmargin=-0.6in \textwidth=6in \textheight=9.4in
\oddsidemargin=0in \evensidemargin=0in

\usepackage[english]{babel}

\usepackage{amsmath}
\usepackage{graphicx}
\usepackage{amssymb}
\usepackage{verbatim}
\usepackage{epsfig}
\usepackage{subfig}
\usepackage{enumerate}
\usepackage{epstopdf}

\newtheorem{remark}{Remark}
\newtheorem{assumption}{Assumption}%[section]
\newtheorem{proposition}{Proposition}%[section]
\newtheorem{corollary}{Corollary}
\newtheorem{proof}{Proof}

\graphicspath{{fig/}}

\usepackage{color}

\newcommand{\col}{ \mbox{col} }
\newcommand{\rank}{ \mbox{rank } }

\def\calu{\mathcal{U}}

\def\calb{\mathcal{B}}
\def\cale{\mathcal{E}}

\def\calv{\mathcal{V}}
\def\calk{\mathcal{K}}

\def\calt{{\cal T}}

\def\hal{{1 \over 2}}

\def\L2{{\cal L}_2}
\def\L2e{{\cal L}_{2e}}

\def\rea{\mathbb{R}}

\def\diag{\mbox{diag}}

\def\begali{\begin{align}}
\def\endali{\end{align}}
\def\begequarr{\begin{eqnarray}}
\def\endequarr{\end{eqnarray}}
\def\begequarrs{\begin{eqnarray*}}
\def\endequarrs{\end{eqnarray*}}
\def\begarr{\begin{array}}
\def\endarr{\end{array}}
\def\begequ{\begin{equation}}
\def\endequ{\end{equation}}
\def\lab{\label}
\def\begdes{\begin{description}}
\def\enddes{\end{description}}
\def\begenu{\begin{enumerate}}
\def\begite{\begin{itemize}}
\def\endite{\end{itemize}}
\def\endenu{\end{enumerate}}

\def\lef[{\left[\begin{array}}
\def\rig]{\end{array}\right]}
\def\qed{\hfill$\Box \Box \Box$}
\def\begcen{\begin{center}}
\def\endcen{\end{center}}
\def\begrem{\begin{remark}\rm}
\def\endrem{\end{remark}}

\def\TAC{{\it IEEE Trans. Automat. Contr.}}
\def\AUT{{\it Automatica}}

\def\SCL{{\it Systems and Control Letters}}

% This one is here to remove 'Submitted to Elsevir'
\makeatletter
\def\ps@pprintTitle{%
  \let\@oddhead\@empty
  \let\@evenhead\@empty
  \def\@oddfoot{\reset@font\hfil\thepage\hfil}
  \let\@evenfoot\@oddfoot
}
\makeatother

\begin{document}
\begin{frontmatter}

%%%%%%%%%%%%%%%%%%%
\title{A Parameter Estimation Approach to  State Observation of Nonlinear Systems\tnoteref{t1}}
\tnotetext[t1]{This paper is submitted to Systems \& Control Letters Journal. The abridged version of this paper will be presented at Conference on Descision and Control $2015$, Osaka, Japan.}

 \author[LSS]{Romeo Ortega}
 \author[ITMO]{Alexey Bobtsov}
 \author[ITMO]{Anton Pyrkin}
 \author[ITMO]{Stanislav Aranovskiy}
 \address[LSS]{Laboratoire des Signaux et Syst\`emes, CNRS-SUPELEC, Plateau du Moulon, 91192, Gif-sur-Yvette, France}
 \address[ITMO]{Department of Control Systems and Informatics, ITMO University, Kronverkskiy av. 49, Saint Petersburg, 197101, Russia}

%%%%%%%%%%%%%%%%%%
%
\begin{abstract}
A novel approach to the problem of partial state estimation of nonlinear systems is proposed. The main idea is to translate the state estimation problem into one of estimation of {\em constant, unknown parameters} related to the systems initial conditions. The class of systems for which the method is applicable is identified via two assumptions related to the transformability of the system into a suitable cascaded form and our ability to estimate the unknown parameters. The first condition involves the solvability of a partial differential equation while the second one requires some persistency of excitation--like conditions. The proposed observer is shown to be applicable to position estimation of a class of electromechanical systems, for the reconstruction of the state of power converters and for speed observation of a class of mechanical systems.
\end{abstract}

\begin{keyword}
Parameter estimation, Adaptive observer, Nonlinear systems
\end{keyword}

\end{frontmatter}

%%%%%%%%%%%%%%%
\section{Introduction}
\lab{sec1}
%%%%%%%%%%%%%%%
%
The problem of designing observers for nonlinear systems has received a lot of attention due to its importance in practical applications, where some of the states may not be available for measurement. The interested reader is referred to  \cite{ASTbook, BESbook} for a recent review of the literature.

In  this paper a new  framework for constructing globally convergent (reduced-order) observers for a well--defined class of nonlinear systems is presented. Instrumental to this development is to formulate the observer design problem as a problem of {\em parameter estimation}, which represents the {\em initial conditions} of the unknown part of the state. This new family of observers are called parameter estimation--based observers (PEBO). The class of systems for which PEBO is applicable is identified via two assumptions. The first one characterizes, via the solvability of a partial differential equation (PDE), systems for which there exists a partial change of coordinates that assigns a particular cascaded structure to the system that permits to obtain a classical {\em regression form} involving only measurable quantities and the unknown parameter. The second assumption pertains to our ability to consistently estimate this unknown parameter that, in general, may enter nonlinearly in the regression form. For {\em linear} regression forms, which may be obtained via over--parameterisation of the nonlinear regression, many well--established parameter estimation algorithms are available and the second assumption can be replaced by the well--known persistency of excitation (PE) condition \cite{LJUbook,SASBODbook}. The latter condition is related with the ``regular input" or ``universal input" assumptions imposed in standard observer designs \cite{ASTbook,BESbook,GAUKUPbook}.

It should be underscored that, in contrast with the classical observer design method based on  linearization up to output injection \cite{KRERES}, where the PDE to be solved imposes stringent conditions on the system, this is not the case for our PDE. The proposed PEBO also compares favourably with Kazantzis--Kravaris--Luenberger observers \cite{KAZKRA} in the following sense. Although both observers require an {\em injectivity} condition, in our observer this is imposed only on the {\em partial} change of coordinates mapping while in the Kazantzis--Kravaris--Luenberger observers the stronger requirement of injectivity of the full--state change of coordinates is needed. As is well--known  \cite{ANDPRA} ensuring the latter injectivity property is the main stumbling block for the application of this kind of observers.

The method is shown to be applicable for position estimation of a class of electromechanical systems. This class contains, as a particular case, the interesting example of permanent magnet synchronous motors (PMSM) that have been widely studied in the control and drives literature---see \cite{ACAWAT,ORTetal} and references therein. It also allows us to design observers for a class of power converters under more realistic measurement assumptions than the existing results obtained with other observer design techniques. Finally, it generates simple speed observers for mechanical systems that are partially linearisable via change of coordinates (PLvCC)---a practically important class that has been thoroughly studied in \cite{VENetal}.

The remaining of the paper is organized as follows. Section \ref{sec2} presents the problem formulation and main result.   Section \ref{sec3} is devoted to a discussion of the results. The case of linear time--invariant (LTI) systems is treated in   Section \ref{sec4}. Section \ref{sec5} illustrates the application of the technique to three physical examples. The paper is wrapped--up with concluding remarks and future research directions in  Section \ref{sec6}.
%
%%%%%%%%%%%%%%%
\section{Problem Formulation and Main Result}
\lab{sec2}
%%%%%%%%%%%%%%%
%
In this section the (partial state) observer problem addressed in the paper, and the approach that we propose to solve it, are presented.   The class of systems for which the PEBO design technique is applicable is identified via two assumptions. The first one, given in Subsection \ref{subsec22}, characterizes systems for which there exists a partial change of coordinates that assigns a particular cascaded structure to the system that permits to reformulate the state observation problem as a problem of parameter estimation. The second assumption, given in Subsection \ref{subsec23}, pertains to our ability to consistently estimate this unknown parameter.

\subsection{Partial state observer design problem}
\lab{subsec21}
%%%%%%%%%%
%
Consider the dynamical system
\begin{align}
\nonumber
\dot x &= f_x(x,y,u)\\
\dot y &= f_y(x,y,u),
\lab{sys}
\end{align}
where $f_x: \mathbb{R}^{n_x}\times \mathbb{R}^{n_y}\times \mathbb{R}^{m} \rightarrow \mathbb{R}^{n_x}$ and $f_y: \mathbb{R}^{n_x}\times \mathbb{R}^{n_y}\times \mathbb{R}^{m} \rightarrow \mathbb{R}^{n_y}$ are smooth mappings.\footnote{Throughout the paper it is assumed that all mappings are sufficiently smooth.} Assume that the input signal vector $u: \rea_+ \to \rea^m$ is such that all trajectories of the system are bounded. Find, if possible, mappings $F: \mathbb{R}^{n_\xi}\times \mathbb{R}^{n_y}\times \mathbb{R}^{m} \rightarrow \mathbb{R}^{n_\xi}$ and $G: \mathbb{R}^{n_\xi}\times \mathbb{R}^{n_y}\times \mathbb{R}^{m} \rightarrow \mathbb{R}^{n_x}$, for some positive integer $n_\xi$, such that the (partial state) observer
\begin{align}
\nonumber
\dot \xi &= F(\xi,y,u)\\
\lab{obs}
\hat x &= G(\xi,y,u),
\end{align}
ensures that $\xi$ is bounded and
\begin{align}
\lab{obscon}
\lim_{t\rightarrow\infty}\left|\hat x(t)-x(t)\right|=0,
\end{align}
for all initial conditions $(x(0),y(0),\xi(0)) \in \rea^{n_x+n_y+n_\xi}$ and a well defined class of input signals $u \in \calu$.

It is important to underscore that, in contrast with the usual  observer problem formulation, a provision regarding the input signal is added. This additional qualifier is needed because the observation problem will be recast in terms of parameter estimation whose solution requires  ``sufficiently exciting" signals. See Section 1.2 in \cite{BESbook} for a thorough discussion of the role of the input in the observation problem. Also, note that  we have writen the system dynamics including the output $y$ as part of the state, this is done, of course, without loss of generality,
\subsection{System re--parametrization}
\lab{subsec22}
%%%%%%%%%%%%%%%
%
\begin{assumption}\em
\lab{ass1}
There exists three mappings
\begin{align*}
\phi &: \mathbb{R}^{n_x}\times \mathbb{R}^{n_y} \rightarrow \mathbb{R}^{n_z} \\
\phi^{\tt L} &: \mathbb{R}^{n_z}\times \mathbb{R}^{n_y} \rightarrow \mathbb{R}^{n_x}\\
h &: \mathbb{R}^{n_y}\times \mathbb{R}^{m} \rightarrow \mathbb{R}^{n_z},
\end{align*}
with  $n_z\ge n_x$, verifying the following conditions.
\begin{enumerate}[(i)]
\item(Left invertibility of $\phi(\cdot,\cdot)$ with respect to its first argument)
$$
\phi^{\tt L}(\phi(x,y),y)=x, \quad \forall x\in\mathbb{R}^{n_z},\;\forall y\in\mathbb{R}^{n_y}.
$$
\item(Transformability into cascade form)
\begequ
\lab{pde}
{\partial \phi \over \partial x} f_x(x,y,u)+ {\partial \phi \over \partial y}  f_y(x,y,u)=h(y,u).
\endequ
\qed
\end{enumerate}
\end{assumption}

An immediate corollary of (ii) in Assumption \ref{ass1} is that the partial change of coordinates
\begequ
\label{eq_z}
z = \phi(x,y),
\endequ
ensures
\begequ
\label{eq_zdot}
\dot z=h(y,u).
\endequ
Moreover, the left invertibility condition (i) ensures that  the partial state $x$ can be recovered from $z$ and $y$, that is,
\begequ
\label{eq_x}
x = \phi^{\tt L}(z,y).
\endequ
The cascade structure of the system is given in Fig. \ref{fig1}. 

\begin{figure}[htp]
\centering
\includegraphics[width=0.8\textwidth]{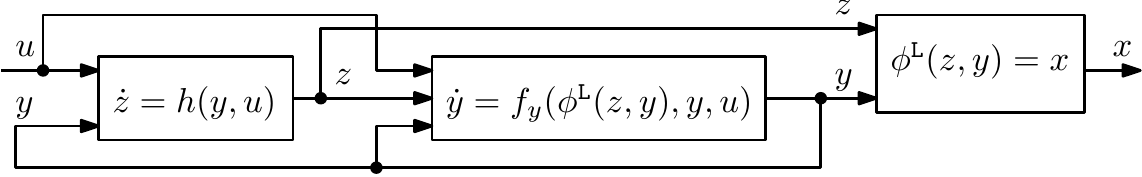}
\caption{Block diagram representation of of the transformed system.}
\label{fig1}
\end{figure}

%
%%%%
\begin{proposition}\em
\lab{pro1}
Consider the system \eqref{sys} verifying Assumption \ref{ass1}. Define the dynamic extension
\begequ
\label{eq_chidot}
\dot\chi = h(y,u),
\endequ
with $\chi(0)\in \rea^{n_z}$. We can compute a mapping $\Phi: \mathbb{R}^{n_z}\times \mathbb{R}^{n_y}\times\mathbb{R}^{m}\times\mathbb{R}^{n_z} \rightarrow \mathbb{R}^{n_y}$ such that
\begin{align}
\lab{eq_ydot}
\dot y & = \Phi(\chi,y,u,\theta)\\
x & =\phi^{\tt L}(\chi+\theta,y),
\label{eq_x2}
\end{align}
where $\theta \in \rea^{n_z}$ is a vector of {\em constant}, unknown parameters.
\end{proposition}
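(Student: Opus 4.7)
The plan is to exploit the fact that under Assumption \ref{ass1}, the transformed coordinate $z=\phi(x,y)$ satisfies the driftless cascade $\dot z = h(y,u)$, and the dynamic extension $\chi$ in \eqref{eq_chidot} is driven by the \emph{same} signal. So the difference $z-\chi$ is conserved, and that conserved vector will play the role of the unknown constant parameter $\theta$.

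First I would apply the chain rule to $z=\phi(x,y)$ and invoke the PDE \eqref{pde} of Assumption \ref{ass1}(ii) to obtain $\dot z = h(y,u)$, which is just \eqref{eq_zdot}. Next, subtracting \eqref{eq_chidot} gives $\frac{d}{dt}(z-\chi)=0$, hence $z(t)-\chi(t)\equiv z(0)-\chi(0)$ for all $t\ge 0$. Define the constant vector
\begin{equation*}
\theta := z(0)-\chi(0) = \phi(x(0),y(0))-\chi(0) \in \rea^{n_z},
\end{equation*}
so that $z(t)=\chi(t)+\theta$ identically in $t$.

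Substituting this identity into the left-invertibility relation \eqref{eq_x} of Assumption \ref{ass1}(i) immediately yields
\begin{equation*}
x = \phi^{\tt L}(z,y) = \phi^{\tt L}(\chi+\theta,y),
\end{equation*}
which is exactly \eqref{eq_x2}. For the output dynamics, I would substitute this expression for $x$ into the second equation of \eqref{sys}, giving $\dot y = f_y\bigl(\phi^{\tt L}(\chi+\theta,y),y,u\bigr)$, and define
\begin{equation*}
\Phi(\chi,y,u,\theta) := f_y\bigl(\phi^{\tt L}(\chi+\theta,y),\,y,\,u\bigr),
\end{equation*}
yielding \eqref{eq_ydot}. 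Smoothness of $\Phi$ follows from that of $f_y$ and $\phi^{\tt L}$.

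There is no real obstacle here; the argument is essentially a one-line observation about the conserved quantity $z-\chi$. The only thing worth emphasizing in the write-up is that $\theta$ is unknown precisely because it depends on the unmeasured initial condition $x(0)$ through $\phi(x(0),y(0))$, which is what justifies treating the observer problem as a parameter estimation problem for $\theta$ via the regression-like equation \eqref{eq_ydot}.
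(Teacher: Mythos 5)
Your proposal is correct and follows essentially the same route as the paper: it derives $\dot z=h(y,u)$ from Assumption \ref{ass1}(ii), integrates $\dot z-\dot\chi=0$ to get $z(t)=\chi(t)+\theta$ with $\theta:=z(0)-\chi(0)$, substitutes into the left inverse to obtain \eqref{eq_x2}, and defines $\Phi:=f_y(\phi^{\tt L}(\chi+\theta,y),y,u)$ exactly as in \eqref{eq_y2}. No gaps; the added remark on why $\theta$ is unknown is a helpful (if optional) clarification.
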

%
%%%%%%
%
\begin{proof}\em
From \eqref{eq_zdot} and \eqref{eq_chidot} we get $\dot z=\dot\chi.$ Hence, integrating this equation yields
\begin{align}
\label{eq_z3}
z(t)=\chi(t)+\theta,
\end{align}
where
\begin{align}
\label{eq_theta}
\theta &:= z(0)-\chi(0).
\end{align}
Replacing \eqref{eq_z3} in \eqref{eq_x} yields \eqref{eq_x2}. Finally,  the regression model \eqref{eq_ydot} is obtained replacing \eqref{eq_x2} in \eqref{sys} to get
\begin{align}
\label{eq_y2}
f_y(\phi^{\tt L}(\chi+\theta,y),y,u)=:\Phi(\chi,y,u,\theta),
\end{align}
completing the proof.
\qed
\end{proof}
\subsection{Consistent parameter estimation}
\lab{subsec23}
%%%%%%%%%%%%%%%
%
An immediate consequence of Proposition \ref{pro1} is that the problem of observation of the unmeasurable state $x$ is translated into a standard {\em parameter estimation} problem for the regression model \eqref{eq_ydot} with the observed state generated by
\begin{align}
\label{eq_hatx}
\hat x=\phi^{\tt L}(\chi+\hat \theta,y),
\end{align}
where $\hat \theta:\rea_+ \to \rea^{n_z}$ is an {\em on--line estimate} of the constant vector $\theta$. Therefore, to complete the PEBO design it is necessary to ensure the existence of a consistent estimator for the unknown parameter $\theta$.
Towards this end, the assumption below is introduced.
\begin{assumption}\em
\lab{ass2}
There exists two mappings
\begin{align*}
H &:\mathbb{R}^{n_z}\times \mathbb{R}^{n_\zeta}\times \mathbb{R}^{n_y} \times \rea^m \rightarrow \mathbb{R}^{n_\zeta} \\
N &:\mathbb{R}^{n_z}\times \mathbb{R}^{n_\zeta}\times \mathbb{R}^{n_y} \times \rea^m \rightarrow \mathbb{R}^{n_z},
\end{align*}
with  $n_\zeta > 0$ such that the parameter estimator
\begequarr
\nonumber
\dot \zeta & = & H(\chi,\zeta,y,u)\\
\lab{parest}
\hat \theta & = & N(\chi,\zeta,y,u),
\endequarr
coupled with the dynamic extension \eqref{eq_chidot} and the regression model \eqref{eq_ydot} ensures that $\zeta$ is bounded and
\begin{align}
\lab{parcon}
\lim_{t\rightarrow\infty}\left|\hat \theta(t)-\theta\right|=0,
\end{align}
for all initial conditions $(y(0),\chi(0),\zeta(0)) \in \rea^{n_y+n_z+n_\zeta}$ and a well defined class of input signals $u \in \calu$.
\qed
\end{assumption}
\subsection{Main result}
\lab{subsec24}
%%%%%%%%%%%%%%%
%
The main result of the paper is contained in the corollary below whose proof follows immediately from \eqref{eq_chidot}, \eqref{eq_x2}, \eqref{eq_hatx}, \eqref{parest} and the parameter convergence assumption \eqref{parcon}.
\begin{corollary}\em
Consider the system \eqref{sys} verifying Assumptions \ref{ass1} and \ref{ass2} with $u \in \calu$. A (partial) state PEBO of the form \eqref{obs} that guarantees \eqref{obscon} is given by
\begali
\nonumber
\xi & := \col(\chi,\zeta)\\
\nonumber
F(\xi,y,u)&:= \lef[{c} h(y,u)\\ H(\chi,\zeta,y,u)\rig]\\
\label{FG}
G(\xi,y,u)&:= \phi^{\tt L}(\chi+N(\chi,\zeta,y,u),y).
\end{align}
\qed
\end{corollary}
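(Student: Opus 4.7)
The plan is to verify that the specific choice of $F$ and $G$ given in \eqref{FG} realizes the two ingredients already constructed in Proposition~\ref{pro1} and Assumption~\ref{ass2}, and then simply read off the conclusion by continuity. First I would unpack $\xi := \col(\chi,\zeta)$ and observe that the two block rows of $F(\xi,y,u)$ coincide exactly with \eqref{eq_chidot} and the parameter estimator dynamics \eqref{parest}. Thus the observer state $\chi$ satisfies the dynamic extension \eqref{eq_chidot}, and $\zeta$ satisfies \eqref{parest}, so all hypotheses needed to invoke Assumption~\ref{ass2} are in force.

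Next I would handle boundedness. Assumption~\ref{ass2} directly gives boundedness of $\zeta$ and the convergence \eqref{parcon}, i.e.\ $\hat\theta(t) \to \theta$. For $\chi$, I would use identity \eqref{eq_z3}, namely $\chi(t) = z(t) - \theta = \phi(x(t),y(t)) - \theta$; since the standing assumption on $u \in \calu$ guarantees boundedness of $(x,y)$, smoothness of $\phi$ implies that $z$ and hence $\chi$ are bounded. Together this gives boundedness of $\xi = \col(\chi,\zeta)$.

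Finally, the convergence claim \eqref{obscon} follows by comparing the observer output
\[
\hat x(t) = G(\xi(t),y(t),u(t)) = \phi^{\tt L}\bigl(\chi(t) + N(\chi(t),\zeta(t),y(t),u(t)),\,y(t)\bigr) = \phi^{\tt L}\bigl(\chi(t)+\hat\theta(t),\,y(t)\bigr)
\]
with the expression for the true state provided by \eqref{eq_x2},
\[
x(t) = \phi^{\tt L}\bigl(\chi(t)+\theta,\,y(t)\bigr).
\]
Since $\hat\theta(t)\to\theta$ by \eqref{parcon}, $\chi$ and $y$ are bounded, and $\phi^{\tt L}$ is smooth (hence uniformly continuous on bounded sets), one concludes $|\hat x(t)-x(t)|\to 0$, which is precisely \eqref{obscon}.

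There is really no technical obstacle here: everything needed has been isolated as an assumption or already proved in Proposition~\ref{pro1}. The only point requiring a word of care is the boundedness of $\chi$, which is not postulated directly in Assumption~\ref{ass2} but follows from the paper's standing hypothesis that $u\in\calu$ renders the trajectories of \eqref{sys} bounded, combined with the identity $\chi = \phi(x,y)-\theta$.
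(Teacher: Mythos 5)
Your proposal is correct and follows essentially the same route as the paper, which simply asserts that the corollary follows from \eqref{eq_chidot}, \eqref{eq_x2}, \eqref{eq_hatx}, \eqref{parest} and \eqref{parcon}; you merely make explicit the identification of the blocks of $F$ and $G$ with those equations. The two details you add---boundedness of $\chi$ via $\chi=\phi(x,y)-\theta$ and the standing boundedness-of-trajectories hypothesis, and the uniform continuity of $\phi^{\tt L}$ on compact sets to pass from $\hat\theta\to\theta$ to $\hat x\to x$---are exactly the steps the paper leaves implicit, and both are handled correctly.
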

\subsection{On the solvability of the PDE  \eqref{pde}}
\lab{subsec25}
%%%%%%%%%%%%%%%
%%
Similarly to all constructive observer design methods the proposed technique involves the solution of a parameterised PDE, namely \eqref{pde}---where we recall $h(y,u)$ is a {\em free} function. See  \cite{ANDPRA,ASTbook,BESbook,GAUKUPbook} for a recent review of the literature where the PDEs of the various existing observer design methods may be found. A key step in our observer design is, of course, the explicit solution of this PDE, some comments in this respect are made in this subsection.

To streamline the presentation of the subsequent discussion it is convenient to define the state vector $X:=\col(x,y)$ and the ($u$--parameterized) vector fields
\begequarrs
f_u(X) & := & \col(f_x(X,u),f_y(X,u)) \\ h_u(X) & := & h(y,u),
\endequarrs
and rewrite the PDE  \eqref{pde} element--by--element as
\begequ
\lab{pdenon}
{\partial \phi_i(X)\over \partial X} f_u(X)=(h_u)_i(X),\quad i=1,\dots,n_z.
\endequ
Following the construction used in  \cite{CHEASTORT} we then define  $n_z$ functions $\tilde \phi_i:\rea^{n_x+n_y} \times \rea \to \rea$ as
$$
\tilde \phi_i(X,s):=\phi_i(X)-s,\quad i=1,\dots,n_z,
$$
and write the non--homogeneous PDE \eqref{pdenon} as the following homogeneous PDE
\begequ
\lab{pdeele}
{\partial \tilde \phi_i(X,s)\over \partial X} f_u(X)+ {\partial \tilde \phi_i(X,s)\over \partial s} (h_u)_i(X)=0.
\endequ
Frobenius Theorem \cite{ISIbook} states that, if
\begequ
\lab{regcon}
\rank \lef[{c} f_u(X) \\ (h_u)_i(X)\rig]=1,\quad i=1,\dots,n_z,
\endequ
(uniformly in $u$) in a neighbourhood of $X_0 \in \rea^{n_x+n_y} $, then \eqref{pdeele} has a {\em local solution} around a point $(X_0,s_0) \in \rea^{n_x+n_y} \times \rea$, because all one--dimensional distributions are involutive. Unfortunately, the regularity condition \eqref{regcon}  may be a restrictive assumption in the present scenario, since rules out solutions around equilibrium points of the system \eqref{sys}. This renders Frobenius Theorem unaplicable for our problem in---often encountered---stabilisation tasks. See Remark {R5} in the next section.

Even in the case where the rank condition is satisfied it is clear that existence of solutions does not imply that an analytic expression for it can be easily obtained---and the requirement of solving the PDE remains the main stumbling block for our approach.
%%%%%%
%
%%%%%%%%%%%%%%%
\section{Discussion}
\lab{sec3}
%%%%%%%%%%%%%%%
%
\noindent {\bf R1}  Besides the explicit solution of the PDE \eqref{pde} an additional difficulty is the selection of a mapping $h(y,u)$ that will ensure that the partial change of coordinates $\phi(x,y)$ admits a {\em left inverse} (with respect to $x$). It should be noted that this injectivity--like property is imposed on the partial change of coordinates  $\phi(x,y)$, which should be contrasted with the requirement of injectivity of the {\em full--state} change of coordinates imposed in the Kazantzis--Kravaris--Luenberger observers \cite{KAZKRA}. As discussed in \cite{ANDPRA} the latter injectivity property is the main stumbling block for the design of these observers---see \cite{PRAMARISI} for a very illustrative example.\\

\noindent {\bf R2} For general nonlinear systems the regression system \eqref{eq_ydot} depends {\em nonlinearly} on the unknown parameters. Although some results are available for the estimation of nonlinearly parameterised, nonlinear systems \cite{ANNetal,GRIetal,LIUetaltac,LIUetalscl,TYUetal} the problem of generating consistent estimates remains wide open. On the other hand,  for the case of {\em linear} parameterisation the estimation problem has a standard solution. Indeed, many techniques \cite{ASTbook,LJUbook,SASBODbook} are available to generate consistent estimates for linear regressions of the form
\begequ
\lab{linreg}
\dot y  = \Phi_0(\chi,y,u) + \Phi_1(\chi,y,u)\theta,
\endequ
with known mappings $\Phi_0:\rea^{n_z}\times \rea^{n_y} \times \rea^m \to \rea^{n_y}$ and  $\Phi_1:\rea^{n_z}\times \rea^{n_y} \times \rea^m \to \rea^{n_y \times n_z}$. See the examples in Section \ref{sec5}. \\

\noindent {\bf R3} As is well--known \cite{LJUbook}, the parameter convergence requirement in parameter estimators involves some form of excitation on the signals---this requirement is encrypted in the condition $u \in \calu$ of Assumption \ref{ass2}. This condition is of the same nature as the ``universal input" or ``regular input" conditions for classical observer designs \cite{ASTbook,BESbook,GAUKUPbook}. For linear regressions of the form \eqref{linreg} it has a very precise characterisation in terms of PE of the regressor matrix  $\Phi_1(\chi,y,u)$, which is defined as the existence of constants $\delta >0$ and $T >0$ such that for all $t \geq 0$
\begequ
\lab{pe}
\int_t^{t+T} \Phi^\top_1(\chi(s),y(s),u(s))\Phi_1(\chi(s),y(s),u(s))ds \geq \delta I_{n_z}.
\endequ
Under the PE condition above it is straightforward to design {\em globally convergent} parameter estimators for the regression model \eqref{linreg}. In this case the set $\calu$ is defined as follows:
$$
\calu:=\{u:\rea_+ \to \rea^m\;|\;\eqref{pe}\; \mbox{holds along trajectories of \eqref{eq_chidot}, \eqref{eq_ydot}}\}.
$$
From \eqref{pe} it is clear that if there are more measured states than unknown ones, that is,  if $n_y \geq n_z$, then the PE condition translates into a simple rank condition on the matrix  $\Phi_1(\chi,y,u)$---this is the case of mechanical systems treated in Subsection \ref{subsec53}. It should be also recalled that in the identification literature there are well--known relationships between the adaptation gains, the PE constants $\delta$ and $T$ and the convergence rate of the estimation errors; see \cite{LOR,SASBODbook}.\\  

\noindent {\bf R4} It should be underscored that, in many cases, it is possible to transform a nonlinearly parameterised regression into a linear one via {\em over--parameterisation}---see the discussion in this respect in \cite{LIUetalscl}. Since over--parameterisation increases the dimension of the parameter space  the excitation requirements on the signals are, of course, more stringent.\\

\noindent {\bf R5} A potential practical drawback of the proposed technique is the utilisation of pure integrators in \eqref{eq_chidot}. Indeed, in some applications the measurable signals may exhibit a (sign definite) bias in {\em steady--state} that will lead to unbounded signals when fed into open--loop integrators. On the other hand, this problem is conspicuous by its absence in regulation tasks. Indeed, in this case  there exists a desired, constant operating point $(x_*,y_*) \in \rea^{n_x+n_y}$ that must satisfy the equilibrium equations
\begin{align*}
0 &= f_x(x_*,y_*,u_*)\\
0 &= f_y(x_*,y_*,u_*),
\end{align*}
for some constant $u_* \in \rea^m$. From the equations above it is clear that a {\em necessary} condition for solvability of the PDE \eqref{pde} is that $h(y_*,u_*)=0$. Hence, in normal operating conditions, the open integration operation will not generate a bias. A similar scenario appears in the ubiquitous PI controllers widely used in industry to drive some error signal to zero. To avoid drift---{\em e.g.}, in the presence of noise---several {\em ad hoc} remedies, including the addition of small leakages and resettings, are well established.\\

\noindent {\bf R6}  As shown in  \eqref{eq_theta} the unknown parameter $\theta$---and, consequently, the estimated state---is determined by the system and observer initial conditions. It may be then argued that this  makes our analysis ``trajectory dependent", hence intrinsically fragile. This criticism would certainly be pertinent if {\em off--line}, instead of on--line, parameter estimators were advocated or if transient performance claims were made. Since this is not the case in  the present work the argument seems specious.
%
%%%%%%%%%%%%%%%%%%%%
\section{Case of Linear Time--Invariant Systems}
\lab{sec4}
%%%%%%%%%%%%%%%%%%
%
The proposed observer design procedure is of little---if at all---use for LTI systems. However, it is interesting to show that even for this simplest case the relationships between the classical notions of observability \cite{KAIbook} and identifiability \cite{WALbook} and Assumptions \ref{ass1} and  \ref{ass2} are far from obvious.
\begin{proposition}\em
\lab{pro2}
Assume the system \eqref{sys} is LTI, that is,
\begin{align}
\lab{syslti}
\left[\begin{matrix}
\dot x \\ \dot y
\end{matrix} \right]
&=
\left[\begin{matrix}
A_{11} & A_{12} \\ A_{21} & A_{22}
\end{matrix} \right]
\left[\begin{matrix}
x \\ y
\end{matrix} \right]
+
\left[\begin{matrix}
B_1 \\ B_2
\end{matrix} \right] u,
\end{align}
where $A_{ij}$ and $B_i,\;i,j=1,2$, are constant matrices of suitable dimensions.
\begenu[(C1)]
\item Observability of the system \eqref{syslti} (with respect to the output $y$) {\em does not} imply Assumption \ref{ass1}.
\item Assumption \ref{ass1} {\em does not} imply  observability of the system \eqref{syslti}.
\item If Assumption \ref{ass1} holds then observability of the system \eqref{syslti} is {\em necessary} for identifiability \cite{WALbook} of the parameter $\theta$ (defined in Proposition \ref{pro1}). Hence, it is necessary for Assumption  \ref{ass2} to hold.
\endenu
\end{proposition}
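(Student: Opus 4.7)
The plan is to handle the three claims separately: each of (C1) and (C2) by a concrete LTI counter-example, and (C3) by a short identifiability argument leveraging Proposition~\ref{pro1}.

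For LTI systems the natural ansatz is a linear partial change of coordinates $\phi(x,y)=Px+Qy$ with $h$ affine, in which case the PDE \eqref{pde} collapses to the algebraic requirement $PA_{11}+QA_{21}=0$ with $h(y,u)=(PA_{12}+QA_{22})y+(PB_1+QB_2)u$, and left-invertibility of $\phi(\cdot,y)$ is equivalent to $P$ having full column rank. For (C1) I would pick a small observable system in which no $[P\;\;Q]$ lying in the left kernel of $[A_{11}^\top\;\;A_{21}^\top]^\top$ admits a full-column-rank $P$. A clean candidate is
\begin{equation*}
\dot x_1=x_2,\qquad \dot x_2=0,\qquad \dot y=x_1,
\end{equation*}
which the Kalman rank test certifies as observable but for which the algebraic constraint forces the first column of $P$ and all of $Q$ to vanish, leaving $P$ rank-deficient. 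To exclude a nonlinear escape I would evaluate \eqref{pde} at $x=0$ to obtain $h\equiv 0$, then integrate the resulting homogeneous PDE along characteristics to conclude that every smooth solution has the form $\phi(x_1,x_2,y)=F\bigl(x_2,\,\tfrac{1}{2}x_1^2-x_2y\bigr)$ for some smooth $F$. Setting $y=0$ shows $\phi$ is invariant under $x_1\mapsto -x_1$, so left-invertibility of $\phi(\cdot,0)$ fails, regardless of how $n_z$ is chosen.

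For (C2) it suffices to exhibit the trivial decoupled pair $\dot x=u$, $\dot y=u$: with $\phi(x,y)=x$, $\phi^{\tt L}(z,y)=z$ and $h(y,u)=u$, Assumption~\ref{ass1} holds immediately, yet the pair is unobservable because $y(\cdot)$ is independent of $x$. The cascade re-parametrization can therefore succeed even when the output carries no information about $x$.

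For (C3) I would argue by contrapositive. Fix Assumption~\ref{ass1}, so that $P$ has full column rank, and suppose the LTI system is not observable. Then there exist two initial states $x(0)\neq\bar x(0)$ generating identical $y(\cdot)$ under the same input and the same $y(0)$. Fixing a common $\chi(0)$, Proposition~\ref{pro1} delivers $\theta-\bar\theta=P\bigl(x(0)-\bar x(0)\bigr)\neq 0$ because $P$ is left-invertible. Two distinct parameters thus generate identical data $(y,u,\chi)$, contradicting identifiability in the sense of \cite{WALbook} and therefore precluding any estimator satisfying \eqref{parcon} for all initial conditions. The main technical hurdle I anticipate is the nonlinear-loophole argument in (C1); parts (C2) and (C3) reduce to bookkeeping once the linear ansatz is in place.
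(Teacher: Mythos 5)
Your proof is correct, and although it starts from the same template as the paper---a linear ansatz $\phi(x,y)=Px+Qy$ reducing \eqref{pde} to the algebraic equation $PA_{11}+QA_{21}=0$ with left-invertibility becoming $\rank P=n_x$, which is exactly the paper's \eqref{pdelti} and \eqref{rant1}---each claim is settled by a genuinely different argument. For (C1) the paper takes $A_{11}=\diag\{a_1,a_2\}$ invertible and $A_{21}=[a\;\;b]$, so that $T_1=-T_2A_{21}A_{11}^{-1}$ has rank at most one; your nilpotent chain $\dot x_1=x_2$, $\dot x_2=0$, $\dot y=x_1$ reaches the same contradiction by a direct kernel computation. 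More importantly, you go on to exclude nonlinear $\phi$ via the characteristics of the homogeneous PDE, whereas the paper simply declares the linear ansatz to be ``without loss of generality''; since (C1) is a non-existence claim, that unjustified reduction is a real gap in the paper's argument which your extra step closes (it is harmless for (C2), which only needs an existence witness). For (C3) the paper computes the regression form explicitly and invokes the identifiability criterion of \cite{WALbook} ($n_y\ge n_z$ and $\rank A_{21}=n_x$) before applying the PBH test \eqref{rank2}; your contrapositive indistinguishability argument---an unobservable direction lies in the $x$-coordinates only, produces identical data $(y,u,\chi)$, yet gives distinct $\theta$ by injectivity of $\phi(\cdot,y)$---is more elementary, bypasses the explicit regression computation, and extends verbatim to nonlinear $\phi$. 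Your (C2) example is essentially the paper's ($A_{11}=0$, $A_{21}=0$, $T_1=1$, $T_2=0$) with an explicit input.

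One point to tighten in (C1): the normal form $\phi=F\bigl(x_2,\tfrac12x_1^2-x_2y\bigr)$ is immediate only where the characteristic field $(x_2,0,x_1)$ is nonsingular and the joint level sets of the invariants are connected; on the slice $x_2=0$ the level sets $\{x_1=\pm c\}$ are disconnected, so a smooth function constant along orbits need not factor through the invariants there. The conclusion survives: for $x_2=c\neq0$ each level set of $\tfrac12x_1^2-cy$ is a single connected parabola, hence $\phi(1,c,0)=\phi(-1,c,0)$, and letting $c\to0$ yields $\phi(1,0,0)=\phi(-1,0,0)$ by continuity, which is all that is needed to defeat left-invertibility.
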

\begin{proof}\em
Without loss of generality we take
\begin{align*}
\label{rankT1}
\phi(x,y)&=T_1x+T_2y,
\end{align*}
where $T_1 \in \rea^{n_z \times n_x}$ and $T_2 \in \rea^{n_z \times n_y}$. Since $n_z \geq n_x$, condition (i) of  Assumption \ref{ass1} is ensured imposing
\begequ
\lab{rant1}
\rank T_1=n_x.
\endequ
Under this conditions we have
\begin{align}
\phi^{\tt L}(z,y)=T_1^\dag(z-T_2y),
\end{align}
where
$$
T_1^\dag:=(T_1^\top T_1)^{-1}T_1^\top .
$$
The PDE \eqref{pde} reduces to the algebraic equation
\begequ
\lab{pdelti}
T_1 A_{11}+T_2 A_{21}=0.
\endequ

Now, applying Popov--Belevitch--Hautus criterion \cite{KAIbook} we conclude that the system \eqref{syslti} is observable if and only if
\begin{align}
\label{rank2}
\rank \left[\begin{matrix}
s I_{n_x}-A_{11} \\ A_{21}
\end{matrix} \right]=n_x, \quad \forall s\in \sigma\{A_{11}\},
\end{align}
where $\sigma\{\cdot\}$ denotes the set of eigenvalues.

We prove claim (C1) by constructing a system that is observable but does not satisfy Assumption \ref{ass1}. For, take $n_x=2$ and $n_y=1$ and set $A_{11}=\diag\{a_1,a_2\}$ with $a_1 \neq a_2 \neq 0$ and $A_{21}=\lef[{cc}a & b \rig]$ with $a,b \neq 0$. The matrix appearing in \eqref{rank2} takes the form
$$
\left[\begin{matrix}
s I_{2}-A_{11} \\ A_{21}
\end{matrix} \right]=\lef[{cc}
s - a_1 & 0 \\ 0 & a - a_2 \\ a & b \rig],
$$
whose rank is two for $s=a_i,\;i=1,2$. Hence, the system is observable. On the other hand,  for all $n_z \geq 2$, \eqref{pdelti} is equivalent to
$$
T_1 =-T_2 A_{21}A^{-1}_{11}.
$$
Consequently
$$
\rank T_1 \leq \min \{\rank T_2, \rank A_{21}\}=1 < 2=n_x,
$$
violating the rank condition \eqref{rant1} .

Claim (C2) is proven by contradiction constructing a system that is not observable but satisfies Assumption \ref{ass1}. For, take $n_x=n_y=n_z=1$ and set $A_{21}=0$ that---from \eqref{rank2}---implies the system is not observable. However, if $A_{11}=0$ the algebraic equation \eqref{pdelti} admits a solution $T_1=1$, $T_2=0$ ensuring  Assumption \ref{ass1}.

Finally, claim (C3) is established proving, after some lengthy but straightforward calculations, that the regression form \eqref{eq_ydot} is given by
\begin{align*}
\dot y=A_{21}T_1^\dag\chi+(A_{22}-A_{21}T_1^\dag T_2)y+B_2u+A_{21}T_1^\dag\theta,
\end{align*}
where we notice that $A_{21}T_1^\dag \in \rea^{n_y \times n_z}$ while $A_{21} \in \rea^{n_y \times n_x}$. From  \cite{WALbook} it follows that $\theta$ is identifiable if and only if $n_y\geq n_z$ and $\rank A_{21}=n_x$---from the latter condition and \eqref{rank2} it is clear that observability follows.
\end{proof}
%
%%%%%%%%%%%%%%%%%%%%
\section{Application to Three Physical Examples}
\lab{sec5}
%%%%%%%%%%%%%%%%%%
%
In this section we prove that PEBO is applicable to the speed observation of PLvCC mechanical systems studied in \cite{VENetal}, the position observation for a class of electromechanical systems, and the reconstruction of the full state from partial measurements of a popular switched power converter.
%%%%%%%%%%%%%%
\subsection{Mechanical systems which are partially linearisable via change of coordinates}
\lab{subsec53}
%%%%%%%%%%%%%%
%
In this subsection we are interested in the problem of speed observation of mechanical systems described in Hamiltonian form by
\begequ
\lab{mecsys}
\lef[{c} \dot y \\ \dot x \rig]=\lef[{cc} 0 & I_s \\ - I_s & 0 \rig]\lef[{c} {\partial H \over \partial y} \\  \\  {\partial H \over \partial x}\rig]+ \lef[{c} 0 \\ G(y) \rig]u,
\endequ
where $s:={n \over 2}$ is the number of degrees of freedom of the system, $y,x \in \rea^s$ are the generalised position and momenta, respectively, $u \in \rea^m$ is the control input, $m \leq s$, $G: \rea^s \to  \rea^{s \times m}$ is the full rank input matrix. The Hamiltonian function $H:  \rea^s \times \rea^s \to \rea$ is the energy function
$$
H(y,x)=\hal x^\top M^{-1}(y)x + \calv(y),
$$
where  $M:\rea^s \to  \rea^{s \times s}$ is the positive definite inertia matrix and $\calv: \rea^s \to  \rea$ is the potential energy function. It is assumed that position $y$ is measurable and we want to estimate velocity $\dot y$ via the estimation of momenta $x$ and the relation $\dot y=M^{-1}(y) x$.

It will be shown that if the dynamics can be rendered linear in momenta (velocities) via a change of coordinates then Assumptions 1 and 2 of the paper are satisfied. Moreover, the resulting reparameterisation is linear, that is, of the form \eqref{linreg}, and the PE condition \eqref{pe} is trivially satisfied, hence the set $\calu$ is the whole input space. These systems, referred as PLvCC, have been studied in \cite{VENetal} and they have been characterised via the solvability of a PDE---see also \cite{CHA} for an intrinsic characterisation of the class. 

To present our result we need to recall the following assumption from \cite{VENetal}.
\begin{assumption}\em
\lab{ass3}
Given the inertia matrix $M(y)$. There exists a full rank matrix $\calt :\rea^{s} \rightarrow \rea^{s \times s}$ such that, for $i=1,\dots, s$,
\begequ
\lab{bii}
\calb_{(i)}(y) + \calb_{(i)}^{\top}(y)=0,
\endequ
where the matrices $\calb_{(i)}: \rea^s \to \rea^{s \times s}$ are defined as
\begin{equation}
\calb_{(i)}(y) :=  \sum_{j = 1}^{n} \Big \{[\calt_{i}, \calt_{j}]\calt_{j}^{{\top}}(M\calt \calt^{{\top}})^{-1} + \frac{1}{2}\calt_{ji}\calt\frac{\partial}{\partial q_{j}}(\calt^{\top} M\calt)^{-1}\calt^{{\top}}\Big\}, 
\label{matt}
\end{equation}
with $\calt_i(y)$ the $i$--th column of $\calt(y)$, $\calt_{ij}(y)$ its $ij$--th element and $[\calt_{i}, \calt_{j}]$ the standard Lie bracket.\footnote{A standard Lie Bracket of two vector fields $\calt_{i}(y)$, $\calt_{j}(y)$ is defined as $[\calt_{i}, \calt_{j}] := \frac{\partial \calt_{j}}{\partial y}\calt_{i} - \frac{\partial \calt_{i}}{\partial y}\calt_{j}$.} In this case the mechanical system \eqref{mecsys} is PLvCC.
\end{assumption}

\begin{proposition}\em
\lab{pro3}
Consider the mechanical system \eqref{mecsys} whose inertia matrix verifies {Assumption} \ref{ass3}. 
\begenu[(i)]
\item  {Assumption} \ref{ass1} is satisfied with the mappings
\begequarrs
\phi(x,y) & = & \calt^\top(y) x\\
\phi^{\tt L}(z,y) & = &  \calt^{-\top}(y) z\\
h(y,u) & = & - \calt^{\top}(y) \left[  {\partial \calv \over \partial y}(y)- G(y) u  \right].
\endequarrs
\item The mapping $\Phi(\chi,y,u,\theta)$ of Proposition \ref{pro1} is linear in $\theta$ and yields
\begequ
\lab{ddy}
\dot y = [ \calt^\top(y) M(y)]^{-1}(\chi + \theta). 
\endequ
\item The PE condition \eqref{pe} is satisfied for all input signals $u$. Hence, Assumption  \ref{ass2} holds with $\calu$ being the whole input space.
\endenu
\end{proposition}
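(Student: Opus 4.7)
The plan is to dispatch the three claims in order, with part (i) being the only one that requires genuine work. For claim (i), left invertibility is immediate: since $n_z=n_x=s$ and $\calt(y)$ is full rank, it is square and invertible, so $\phi^{\tt L}(\phi(x,y),y)=\calt^{-\top}(y)\calt^\top(y)x=x$. To verify the PDE \eqref{pde}, I would compute $\dot{(\calt^\top x)}$ along trajectories of \eqref{mecsys}, namely
\begequarrs
\frac{d}{dt}\bigl[\calt^\top(y)x\bigr] &=& \dot\calt^\top(y)x+\calt^\top(y)\Bigl[-\tfrac{\partial H}{\partial y}(y,x)+G(y)u\Bigr]\\
&=& \dot\calt^\top(y)x-\tfrac{1}{2}\calt^\top(y)\tfrac{\partial}{\partial y}\!\bigl[x^\top M^{-1}(y)x\bigr]-\calt^\top(y)\tfrac{\partial\calv}{\partial y}(y)+\calt^\top(y)G(y)u,
\endequarrs
where $\dot y=M^{-1}(y)x$ has been used implicitly in $\dot\calt^\top$. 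The first and second terms are \emph{quadratic} in $x$, while the last two are precisely $h(y,u)$ as claimed. The task is therefore to show that the two quadratic-in-$x$ terms cancel componentwise; this cancellation is exactly the content of the skew-symmetry condition \eqref{bii}, which upon grouping into the matrices $\calb_{(i)}(y)$ of \eqref{matt} expresses the $i$-th component of the quadratic residual as $x^\top\calb_{(i)}(y)x$, vanishing by \eqref{bii}. This is the main (and only) technical step, and it is essentially an invocation of the PLvCC construction of \cite{VENetal}.

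Claim (ii) is a direct substitution: using $x=\phi^{\tt L}(\chi+\theta,y)=\calt^{-\top}(y)(\chi+\theta)$ in the position equation $\dot y=M^{-1}(y)x$ of \eqref{mecsys} gives
\begequ
\dot y=M^{-1}(y)\calt^{-\top}(y)(\chi+\theta)=\bigl[\calt^\top(y)M(y)\bigr]^{-1}(\chi+\theta),
\endequ
which is manifestly affine in $\theta$, matching \eqref{linreg} with $\Phi_0(\chi,y,u)=[\calt^\top(y)M(y)]^{-1}\chi$ and $\Phi_1(\chi,y,u)=[\calt^\top(y)M(y)]^{-1}$.

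Claim (iii) follows by observing that $n_y=n_z=s$, so $\Phi_1$ is a \emph{square} invertible matrix-valued function of $y$. Since trajectories of \eqref{mecsys} are assumed bounded, $y(t)$ evolves in a compact set on which both $M(y)$ and $\calt(y)$ are continuous and nonsingular; hence $\sigma_{\min}\bigl(\Phi_1(\chi(t),y(t),u(t))\bigr)\ge c>0$ uniformly in $t$. Consequently $\Phi_1^\top\Phi_1\succeq c^2 I_{n_z}$ pointwise, and \eqref{pe} holds for \emph{any} $T>0$ with $\delta=c^2 T$, regardless of $u$. Thus Assumption \ref{ass2} is satisfied with $\calu$ the entire input space, completing the argument. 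The only real obstacle is the quadratic-cancellation step in part (i), and that obstacle is already resolved by citing the PLvCC characterization.
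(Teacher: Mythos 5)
Your proposal is correct and follows essentially the same route as the paper: the paper likewise disposes of the key cancellation by invoking Proposition 1 of \cite{VENetal} to assert that $z=\calt^\top(y)x$ linearises the momenta dynamics, and then notes that $\Phi_1=[\calt^\top(y)M(y)]^{-1}$ is square and full rank so that \eqref{pe} is trivially satisfied. The only differences are that you unpack where the quadratic-in-$x$ residual comes from before deferring its vanishing to the skew-symmetry condition \eqref{bii}, and that you justify the uniform lower bound in \eqref{pe} via boundedness of trajectories and compactness --- a detail the paper leaves implicit.
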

\begin{proof}\em
Invoking Proposition 1 of \cite{VENetal} we have that the partial change of coordinates
$$
z =  \calt^\top(y) x,
$$
transforms the system into the form
\begequarrs
\dot y & = & [ \calt^\top(y) M(y)]^{-1}z\\
\dot z & = & - \calt^{\top}(y) \left[  {\partial \calv \over \partial y}(y)- G(y) u  \right],
\endequarrs
which establishes claims (i) and (ii).  To prove claim (iii) we refer to  \eqref{linreg} and \eqref{ddy} and identify
$$
 \Phi_1(\chi,y):= [ \calt^\top(y) M(y)]^{-1}
$$
which is a square, full rank matrix, hence \eqref{pe} is trivially satisfied. 
\end{proof}

From Proposition \ref{pro3} the design of a  globally exponentially stable momenta observer follows trivially, and is omitted for brevity. It should be underscored that the resulting observer is much simpler than the I$\&$I observer proposed in \cite{VENetal}. This is particularly true for systems that {\em do not} satisfy the integrability Assumption  2 in \cite{VENetal}, for which it is necessary to use the, rather involved, and {\em high--gain}--like technique of I$\&$I with dynamic scaling.  

\subsection{Position observation in electromechanical systems}
\lab{subsec51}
%%%%%%%%%%
%
We consider electromechanical systems\footnote{The interested reader is referred to \cite{MEIbook,ORTetalbook} for additional details on this model.} consisting of $n_\lambda$ inductances and a single mass, whose flux and mechanical position are denoted by $\lambda \in \rea^{n_\lambda}$ and $q \in \rea$, respectively. The magnetic energy stored in the inductances is given by
$$
\cale_{\tt M}(i,q):=\hal i^\top L(q)i + \mu^\top(q) i
$$
where $L:\rea \to \rea^{n_\lambda \times n_\lambda}$ is the positive definite, (position--dependent) inductance matrix, $\mu :\rea \to \rea^{n_\lambda}$ represents the flux linkages due to the possible presence of permanent magnets and $i \in \rea^{n_\lambda}$ are the currents flowing through the inductances.    The kinetic energy of the mass is
$$
\calk(\dot q):=\hal j \dot q^2,
$$
with $j>0$ the (constant) mass inertia. We assume that the system is subject to constant external forces, {\em e.g.}, gravitational forces, but it does not have any other potential energy storing elements. Hence, the systems potential energy is given by
 $$
\calv(q):= q \tau,
$$
where $\tau \in \rea$ represents the constant external force.

The dynamical model of the system is obtained applying Euler--Lagrange equations with the energy functions above  yielding
\begequarr
\lab{el1}
L(q) {di \over dt} + L'(q) \dot q i + \mu'(q) \dot q + R i & = & B u\\
\lab{el2}
j \ddot q - \hal i^\top L'(q) i - i^\top \mu'(q) +f \dot q & = & - \tau,
\endequarr
where $(\cdot)'$ denotes differentiation,  $R = \diag\{r_1,\dots,r_{n_\lambda}\} \geq 0$ is the matrix of resistors (in series with the inductors), $B \in \rea^{n_\lambda \times m}$ is a constant input matrix, $u \in \rea^m$ are external voltage sources and $f \geq 0$ is a Coulomb friction coefficient.

As shown in \cite{ORTetalbook} the model \eqref{el1}, \eqref{el2} describes the behaviour of a large class of electromechanical systems, including the classical levitated ball and the most common electrical motors.

We are interested in the design of an observer for the mechanical position $q$ measuring $\col(i,\dot q)$, as well as the more practically interesting case when we measure only $i$. To proceed with the observer design we recall Gauss's and Ampere's laws that establish the following expression for flux linkage vector
\begequ
\lab{flu}
\lambda =  L(q)i + \mu(q).
\endequ
Moreover, Gauss's law tells us that
\begequ
\lab{dotlam}
\dot \lambda = - R i + B u.
\endequ
The latter two equations correspond, of course, to the electrical equation \eqref{el1}. Since $i$ is measurable \eqref{dotlam} proves that $L(q)i + \mu(q)$ qualifies as an admissible partial change of coordinates\footnote{Notice that, with respect to the notation  in Proposition \ref{pro1}, $\lambda$ plays the role of $z$ and $q$ is $x$.} $\phi(x,y)$ verifying condition (ii) of Assumption \ref{ass1} with $n_z=2$ and
$$
h(y,u) :=  -Ri + Bu.
$$
Condition (i) of Assumption \ref{ass1} is satisfied if---given \eqref{flu}---we can recover $q$ from measurement of $\lambda$ and $i$. Interestingly, we show below that this is the case for PMSMs.

Unfortunately, if the only measurable quantity is $i$, its time derivative ${di \over dt}$ in \eqref{el1} is not in the form of \eqref{sys}, that is $\dot y = f_y(x,y,u)$, with $x$ being only $q$ and $y$ only $i$. Indeed, besides $q$ and $i$, the electrical equation \eqref{el1} contains the velocity $\dot q$. Hence, in order to obtain the regressor from \eqref{eq_ydot} it is necessary to assume also measurement of $\dot q$.

Let us proceed now with the observer design for the (surface mount)  PMSM \cite{KRAbook,ORTetalbook}. For the sake of clarity of exposition  assume first that $i$ and $\dot q$ are measured---the requirement of measuring $\dot q$ is relaxed later. For the PMSM we have $n_\lambda=2$, $m=2$, $B=I_2$ and
\begequarr
\nonumber
L(q) & = & L I_2\\
\lab{lmu}
\mu(q) & = & \lambda_m \lef[{c} \cos(n_p q) \\ \sin(n_p q)\rig]
\endequarr
where the positive constants $L,\lambda_m$ and $n_p$ are the stator inductance, permanent magnet flux constant and number of pole pairs, respectively. Hence, defining
$$
\phi(q,i):=L i +  \lambda_m \lef[{c} \cos(n_p q) \\ \sin(n_p q)\rig]=\lambda,
$$
it is clear that the mapping
$$
\phi^{\tt L}(\lambda,i):={1\over n_p}\arctan\left(\frac{\lambda_2-L i_2}{\lambda_1-L i_1}\right),
$$
satisfies condition (i) of Assumption \ref{ass1}.

The dynamic extension \eqref{eq_chidot} is given by
\begequ
\lab{dynext}
\dot \chi = -R i + u.
\endequ
Now, from \eqref{dotlam} (with $B=I_2$) and \eqref{dynext} we have, upon integration, that
\begequ
\lab{lamchithe}
\lambda(t)=\chi(t)+\theta,
\endequ
where  $\theta:=\lambda(0)-\chi(0)$ is the unknown parameter. After some lengthy, but straightforward calculations mimicking the proof of Proposition \ref{pro1}, we obtain a linear regression form for the currents as
\begin{align}
\lab{didt}
{d \over dt} \left[\begin{matrix}
i_1 \\  { i_2 }
\end{matrix} \right]=\Phi_0(\chi,i,\dot q,u)+\Phi_1(\dot q)\theta
\end{align}
where
\begin{align*}
\Phi_0(\chi,i,\dot q,u) &:=\lef[{cc} -{R\over L} & - n_p \dot q \\ n_p \dot q & - {R\over L} \rig]\left[\begin{matrix} i_1\\ i_2 \end{matrix} \right]+n_p \dot q \lef[{c} \chi_2 \\ \chi_1 \rig] +u\\
\Phi_1(\dot q)&:= n_p \dot q \lef[{c} \chi_2 \\ \chi_1 \rig].
\end{align*}
Although the regression form \eqref{didt} can be extended with an equation for $\ddot q$ this turns out to be unnecessary to solve the parameter estimation task that consists only of two unknown parameters.

A classical parameter estimator can be designed for the linear regression form \eqref{didt}.  However, we make the important observation that the requirement of measuring $\dot q$, which is not realistic in a practical scenario, can be obviated. Towards this end we proceed as follow. First, from \eqref{flu} and \eqref{lmu} we have that
$$
|\lambda - Li|^2=\lambda_m^2.
$$
Second, replacing \eqref{lamchithe} above and expanding the square yields the {\em static} linear regression form
\begequ
\lab{newreg}
Y(\chi,i) = S^\top (\chi, i) \eta,
\endequ
where
\begequarrs
Y(\chi,i) & := & |\chi - Li|^2 \\
S (\chi, i) & := &  \lef[{c}- 2(\chi - Li) \\ 1 \rig]
\endequarrs
are, of course, measurable and the new (extended) unknown parameter is
$$
\eta:=\lef[{c} \theta \\ \lambda_m^2 - |\theta|^2 \rig].
$$
A full theoretical analysis and extensive simulations and experimental results of parameter estimators for the regressions \eqref{didt} and (a filtered version of)  \eqref{newreg} may be found in \cite{BOBPYRORT}. As shown in that paper the set $\calu$ is defined as follows:
$$
\calu:=\{u:\rea_+ \to \rea^2\;|\; \int_t^{t+T} \dot q^2(s)ds \geq \delta>0, \; \mbox{along trajectories of \eqref{el1}, \eqref{el2} and \eqref{lmu}}\}.
$$

\subsection{\'Cuk converter}
\lab{subsec52}
%%%%%%%%%%%%%%
%
In this subsection we apply the proposed observer design technique to power converters. As will become clear below the technique applies to a broad class of converters, including the popular boost converter. For the sake of ease of exposition, instead of developing a---notationally cumbersome---general theory for a broader class of power converters, we preferred to concentrate on the specific example of  the \'Cuk power converter, depicted in Fig. \ref{fig2}.

\begin{figure}[htp]
\centering
\includegraphics[height=3.5cm]{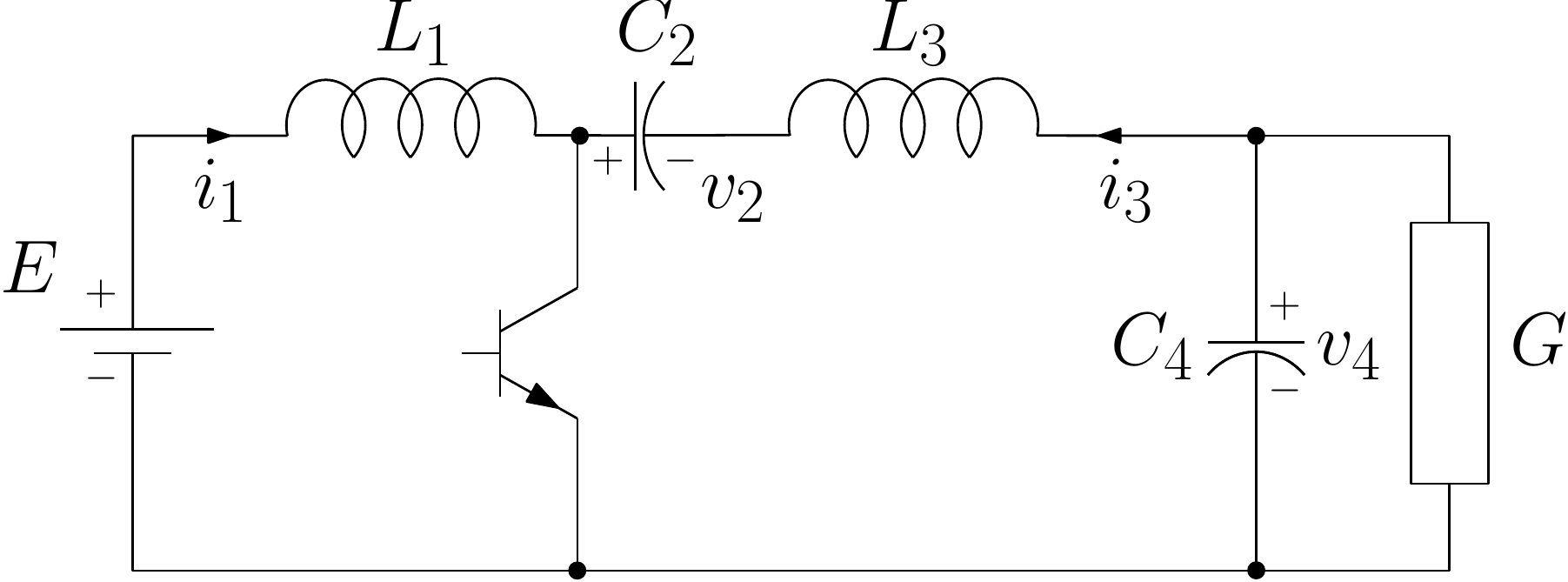}
\caption{DC--DC \'Cuk converter circuit}
\label{fig2}
\end{figure}

The average model of this device is given by
\begequ
\lab{cukdyn}
\begin{array}{lcr}
L_1{d  i_1 \over dt}=-(1-u)v_2+E\\
C_2 \dot v_2=(1-u)i_1+ui_3\\
L_3 {d  i_3 \over dt}=-uv_2-v_4\\
C_4 \dot v_4=i_3-Gv_4,
\end{array}
\endequ
where $L_1,C_2,L_3,C_4,E$ and $G$ are positive constants and $u \in (0,1)$ is a duty cycle. We refer the reader to \cite{ASTbook} for further details on the model.

To illustrate the generality of the approach we consider two different measurement scenarios. In the first one we assume that $(v_2,v_4)$ are measurable, while in the second one $(v_2,i_3)$ are measurable. Although from the practical viewpoint it is ``easier" to measure voltages, we also consider the second one since,   as shown in \cite{ASTbook}, is the one that can be solved with immersion and invariance (I$\&$I) observers, with which we compare our observer in simulations below.\\

%%%%%%
\noindent \textbf{Case I} Denoting $y:=\col(v_2, v_4),\;x:=\col(i_1,i_3)$
we get from \eqref{cukdyn}
\begin{align*}
\dot x_1&=-{1\over L_1}(1-u)y_1+{E\over L_1}\\
\dot x_2&=-{1\over L_3}uy_1-{1\over L_3}y_2.
\end{align*}
Since the right hand side of these equations is independent of $x$ we can directly select
\begin{align*}
\phi(x,y)= x.
\end{align*}
The dynamic extension is given by
\begequarrs
\lab{chi2}
\dot\chi=\lef[{c} -{1\over L_1}(1-u)y_1+{E\over L_1}\\ -{1\over L_3}uy_1-{1\over L_3}y_2\rig]=:h(y,u),
\endequarrs
and the regression form is
\begequarr
\nonumber
\dot y & = & \Phi_0(\chi,y,u)+\Phi_1(u)\theta\\
\theta & := & x(0)-\chi(0)
\lab{regfor2}
\endequarr
where
\begequarr
\lab{case2_Phi}
 \Phi_0(\chi,y,u):=\left[\begin{matrix}
{1\over C_2}(1-u)\chi_1+{1\over C_2}u\chi_2 \\
{1\over C_4}\chi_2-{G\over C_4}y_2
\end{matrix}\right],\; \Phi_1(u):=\left[\begin{matrix}
{1\over C_2}(1-u) & {1\over C_2}u \\
0 & {1\over C_4}
\end{matrix}\right].
\endequarr
The model \eqref{regfor2} contains the time derivative of the output $y$. To get a classical (static) regression model we use the standard filtering technique \cite{MIDGOO} and define the filtered signals
$$
\overline{(\cdot)}:= {\alpha \over p + \alpha}(\cdot),
$$
where $p:={d \over dt}$ and $\alpha > 0$ is a design parameter.  Applying the filter to \eqref{regfor2} we obtain the standard linear, static regression model
\begequ
\lab{regmod}
\vartheta = \overline\Phi_1 \theta + \epsilon
\endequ
where 
$$
\lab{regmod_out}
\vartheta := {\alpha p \over p + \alpha}y- {\alpha  \over p + \alpha}\Phi_0
$$
is clearly measurable (without differentiation) and $\epsilon$ is an exponentially decaying signal that depends on the filter initial conditions and the filter time constant ${1 \over \alpha}$. 

The regression model \eqref{regmod} is used for the parameter estimator, which is the classical gradient estimator 
\begequarrs
\lab{theta_hat}
\dot{\hat\theta} & = & \Gamma \overline\Phi_1^\top (\vartheta - \overline\Phi_1\,\hat\theta),
\endequarrs
where the adaptation gain $\Gamma=\Gamma^\top>0$ is a design parameter. The state observer is defined as $\hat x=\hat\theta +\chi.$

It is important to underscore that the regressor matrix $\Phi_1(u)$  given in \eqref{case2_Phi} has an extremely simple form. Indeed, due to its upper triangular form, the estimation of the second parameter is decoupled from the first one and, moreover, the corresponding term in the regression is simply the constant ${1 \over C_4}$.  Also, since the matrix depends only on the input signal $u$ the  set $\calu$ is defined as 
$$
\calu:=\{u:\rea_+ \to (0,1)\;|\; \int_t^{t+T} \lef[{cc} 1-u(s) & (1-u(s))u(s) \\     (1-u(s))u(s)   & u^2(s) + {C^2_2 \over C^2_4} \rig] ds \geq \delta I_2>0\}.
$$
Some simple calculations show that the matrix inside the integral is {\em positive definite} for any $u \in (0,1)$. Hence,  $\calu=\{u:\rea_+ \to (0,1)\}$ and consistent estimation is always guaranteed. 

Simulations were carried out to evaluate the performance of the proposed observer.  The simulations were done for the model \eqref{cukdyn} in closed--loop with the {\em certainty equivalent} version of the full--state feedback I$\&$I controller given in Proposition 8.2 of \cite{ASTbook}. That is, the control law was defined by
\begin{align}
u={|V_d|\over |V_d|+E}+\lambda\frac{G|V_d|v_2+E(\hat x_2-\hat x_1)}{1+(G|V_d|v_2+E(\hat x_2-\hat x_1))^2}
\end{align}
where  $V_d<0$ is the {\em reference} imposed to the output voltage $v_4$ and $\lambda$ is chosen as
$$
\lambda =\lambda_0\,\min\left({|V_d|\over|V_d|+E},{E\over |V_d|+E}\right),
$$
with $0<\lambda_0<2$. The full-state version of this controller, {\em i.e.}, replacing $\hat x_1$ and $\hat x_2$ by $i_1$ and $i_3$, respectively, ensures global asymptotic stability of the desired equilibrium as well as verification of the saturation constraints in the input signal. 

The numerical simulations were performed with the following values of the converter parameters $L_1 = 10$~mH, $C_2 = 22.0$~$\mu$F, $L_3 = 10$~mH and $C_4 = 22.9$~$\mu$F, $G = 0.0447$~S and $E = 12$~V. The initial conditions for all simulations are set to $x(0)=(0.5,-1)$, $y(0)=(10 ,-12)$.  The initial set point for the output voltage is $V_d = 25$~V, and then this is changed at $t = 0.2$~s to $V_d = 30$~V, at $t = 0.4$~s to $V_d = 15$~V, at $t = 0.6$~s to $V_d = 5$~V, at $t = 0.8$~s to $V_d = 20$~V. The simulation results are presented in Fig. \ref{fig_gbo2}.\\

%
%%%%%%
\noindent \textbf{Case II} Denoting now $y:=\col(v_2, i_3),\;x:=\col(i_1,v_4)$ we get from \eqref{cukdyn}
\begin{align*}
\dot x_1&=-{1\over L_1}(1-u)y_1+{E\over L_1}\\
\dot x_2&={1\over C_4}y_2-{G\over C_4}x_2.
\end{align*}
The right hand side of the second equation depends on $x_2$, therefore the choice $\phi(x,y)=x$ is not suitable here. We propose instead
\begin{align*}
\phi(x,y)= x- \lef[{c} 0 \\ {GL_3\over C_4}y_2\rig],
\end{align*}
that, introducing the partial change of coordinates $z=\phi(x,y)$, yields the required form
$$
\dot z =\lef[{c} -{1\over L_1}(1-u)y_1+{E\over L_1}\\ {1\over C_4}y_2+{G\over C_4}uy_1\rig]=:h(y,u).
$$
The dynamic extension is then given by $\dot\chi=h(y,u),$ and the regression model is of the form 
\begequarrs
\nonumber
\dot y & = & \Phi_0(\chi,y,u)+\Phi_1(u)\theta\\
\theta & := & x(0)-\chi(0)-\lef[{c} 0 \\ {GL_3\over C_4}y_2(0)\rig],
\lab{regfor1}
\endequarrs
where
\begequ
\lab{phi1}
 \Phi_0(\chi,y,u):=\left[\begin{matrix}{1\over C_2}(1-u)\chi_1+{1\over C_2}uy_2 \\
-{1\over L_3}uy_1-{1 \over L_3}\chi_2 -{GL_3\over C_4}y_2\end{matrix}\right],\;
\Phi_1(u):=\left[\begin{matrix} {1\over C_2}(1-u) & 0 \\ 0 & -{1\over L_3} \end{matrix}\right].
\endequ
The state observer takes the form
\begequarrs
\lab{x_hat}
\hat x=\hat\theta +\chi+\lef[{c} 0 \\ {GL_3\over C_4}y_2\rig].
\endequarrs

The regressor matrix $\Phi_1(u)$  given in \eqref{phi1} has an even simpler form than the one of Case I above. Indeed, the matrix is now diagonal with the second term in the regression simply the constant ${-1 \over L_3}$.  Clearly, for this case we also have  $\calu=\{u:\rea_+ \to (0,1)\}$ and consistent estimation is always guaranteed. 

In Proposition 8.3 of \cite{ASTbook} the following I$\&$I observer is proposed
\begequarr
\nonumber
\hat x_{I\&I} &=& \lef[{c} \zeta_1 \\ \zeta_2 \rig]+\lef[{c} C_2\gamma_1y_1 \\ L_3\gamma_2y_2\rig]\\
\nonumber
\dot{\hat \zeta}_1&=&{1\over L_1}[ -(1-u)y_1+E]-\gamma_1[(1-u)(\hat \zeta_1+C_2\gamma_1 y_1)+uy_2]\\
\lab{I&I}
\dot{\hat \zeta}_2&=& {1\over C_4} [y_2-G(\hat \zeta_2-L_3\gamma_2y_2)]-\gamma_2[uy_1+\hat \zeta_2-L_3\gamma_2y_2],
\endequarr
where $\gamma_1,\gamma_2 > 0$ are design parameters. It should be noted that in the latter reference the parameters $E$ and $G$ are treated as unknown and are also estimated. If they are assumed known the I$\&$I observer takes the form given above.

The performance of our observer was compared with the I$\&$I observer \eqref{I&I} via numerical simulations. They were done under the same scenario as the ones done for Case I, but now with the certainty equivalent observer that results replacing  $i_1$ and $v_4$ by $\hat x_1$ and $\hat x_2$, respectively. The simulation results are presented in Figs. \ref{fig_gbo1}, \ref{fig_gbo1_2} with different observer gains. 

%
%%%%%%%%%%%%%%%%%%%%
\section{Conclusions}
\lab{sec6}
%%%%%%%%%%%%%%%%%%
%
A radically new approach to design state observers for nonlinear systems has been proposed. The key idea is to translate the state observation problem into one of parameter estimation. It turns out that this is possible if we can find a partial change of coordinates $z=\phi(x,y)$ such that $\dot z$ depends only on $y$ and $u$. The observer then comprises a copy of $\dot z$ (called $\dot \chi$) that, upon integration, differs from $z$ only on the initial conditions---and these are the (constant) parameters that we propose to estimate. If the change of coordinates satisfies an injectivity property then $x$ can be estimated  from the knowledge of $y,u,\chi$ and an estimate of $\theta$. Clearly, if the latter converges to $\theta$, then the estimate of $x$ will converge to its true value.

It has been shown in the paper that the change of coordinates is obtained from the solution of a parameterised PDE, which does not impose ({\em a priori}) the strict constraints of the classical observer design \cite{KRERES}. Moreover, it is argued that the required injectivity property is weaker than the one required in the Kazantzis--Kravaris--Luenberger and the I$\&$I observers.

The design of the observer is completed adding a parameter estimator to a regression model of the form $\dot y  = \Phi(\chi,y,u,\theta)$ that, in general, depends nonlinearly on the parameters. Although some estimation techniques for nonlinearly parameterised nonlinear systems are available, it is also suggested that---via over--parameterisation---it may be possible to transform the regression into a linearly parameterised one. The latter case has been widely studied in the literature and many algorithms that guarantee parameter convergence under some excitation conditions are available.

The proposed technique has been shown to be applicable to position estimation of a class of electromechanical systems, to power converters and to speed estimation of the PLvCC mechanical systems studied in \cite{VENetal}.

Current research is under way in the following directions.
\begite
\item Identify other classes of physical systems to which the proposed method is applicable. 
\item Compare the performance of PEBO for the PLvCC mechanical systems discussed in Subsection \ref{subsec53} with other existing speed observers.
\item Further clarify under which conditions the required partial change of coordinates exists and when it will lead to an easily tractable, linearly parameterised,  regression model.
\item Further explore the connection between  the classical concepts of observability and identifiability and Assumption \ref{ass1} and the excitation conditions required by the method. In this respect, the analysis of the simplest LTI case of Section \ref{sec4}  shows that this task is far from obvious. 
\endite

{
\begin{figure*}[htp]
\centering
\subfloat[][]{{\label{fig_3a}}\includegraphics[width=0.43\textwidth]{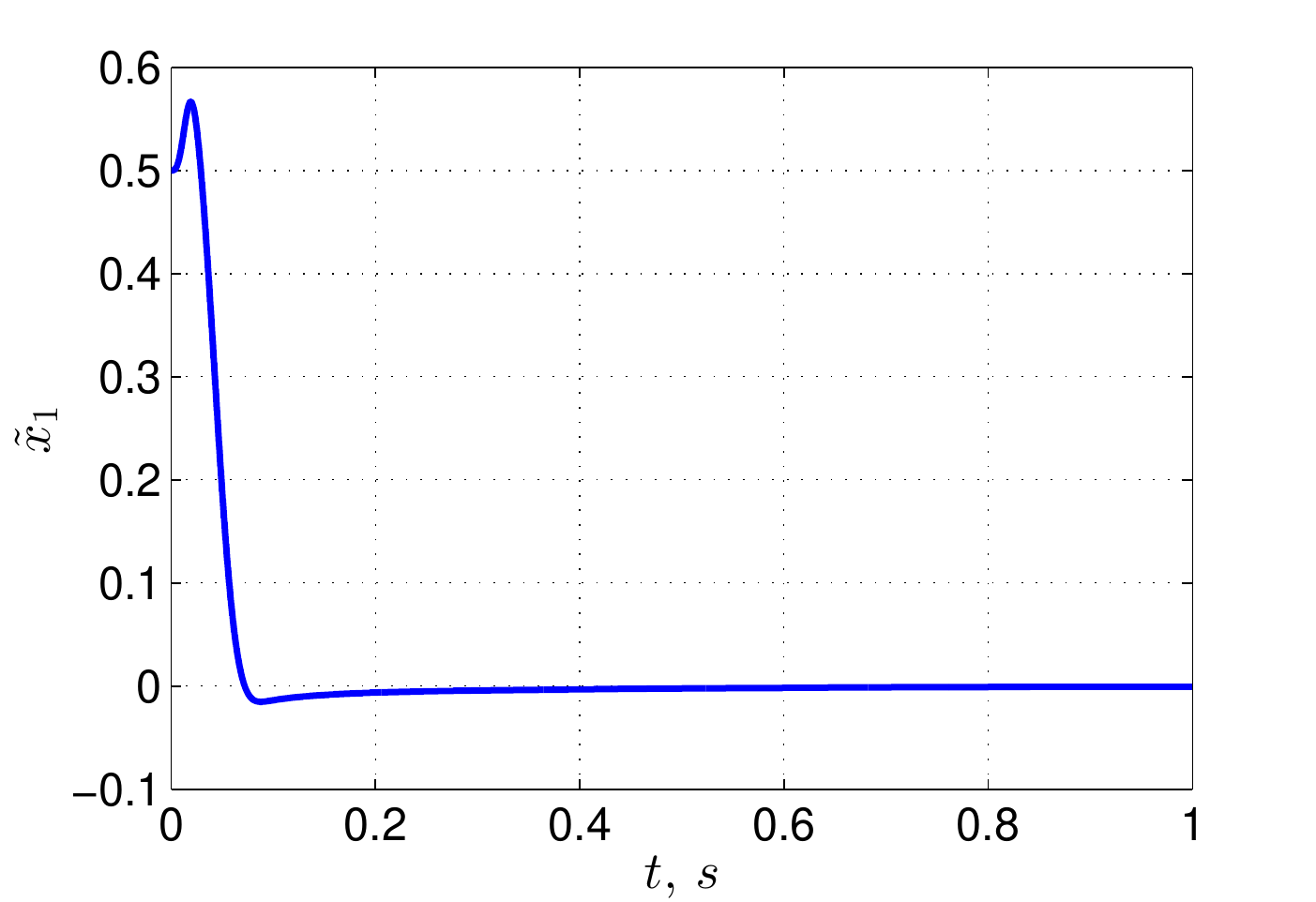}}
\
\subfloat[][]{{\label{fig_3b}}\includegraphics[width=0.43\textwidth]{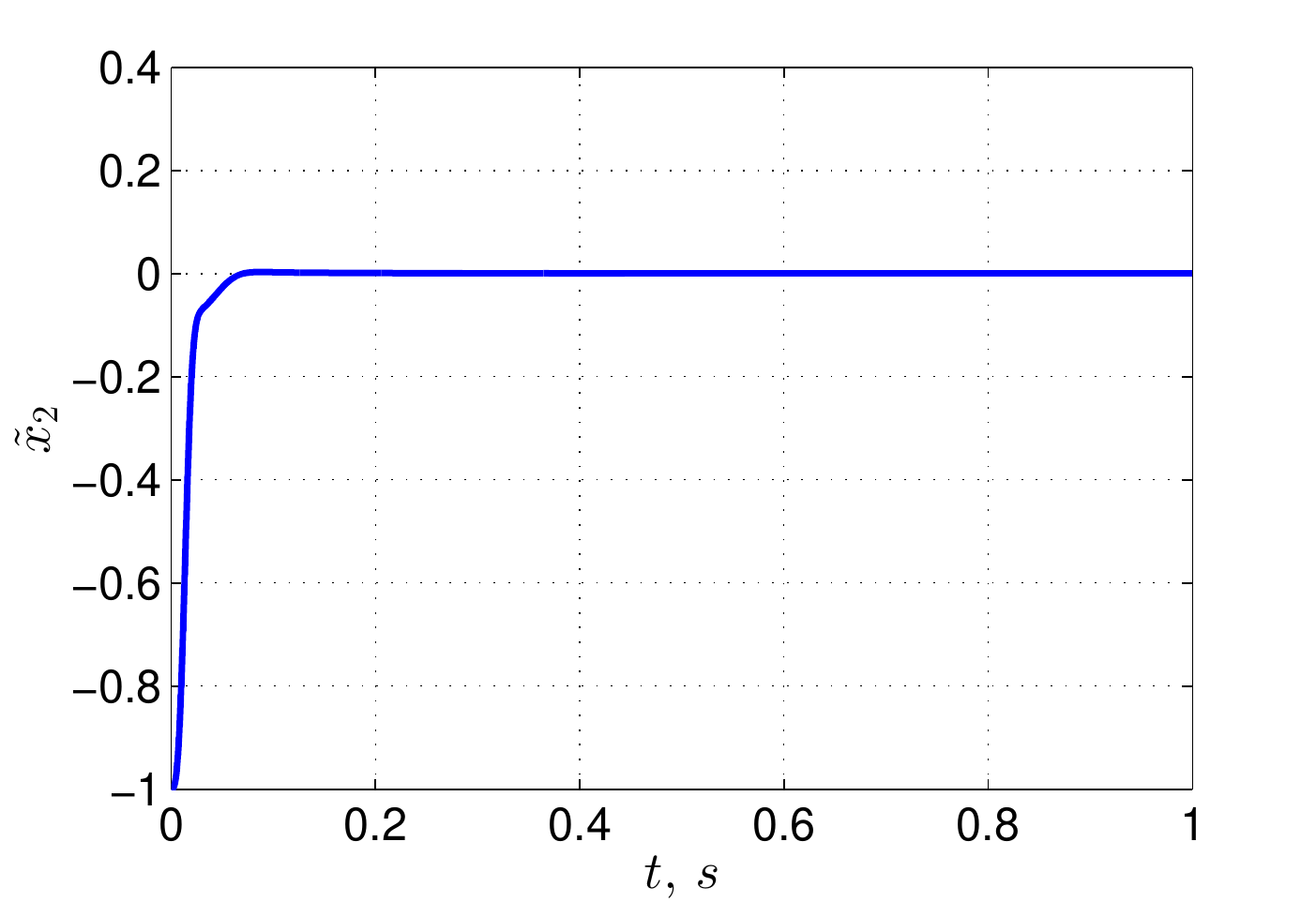}}
\\ \vspace{-3mm}
\subfloat[][]{{\label{fig_3c}}\includegraphics[width=0.43\textwidth]{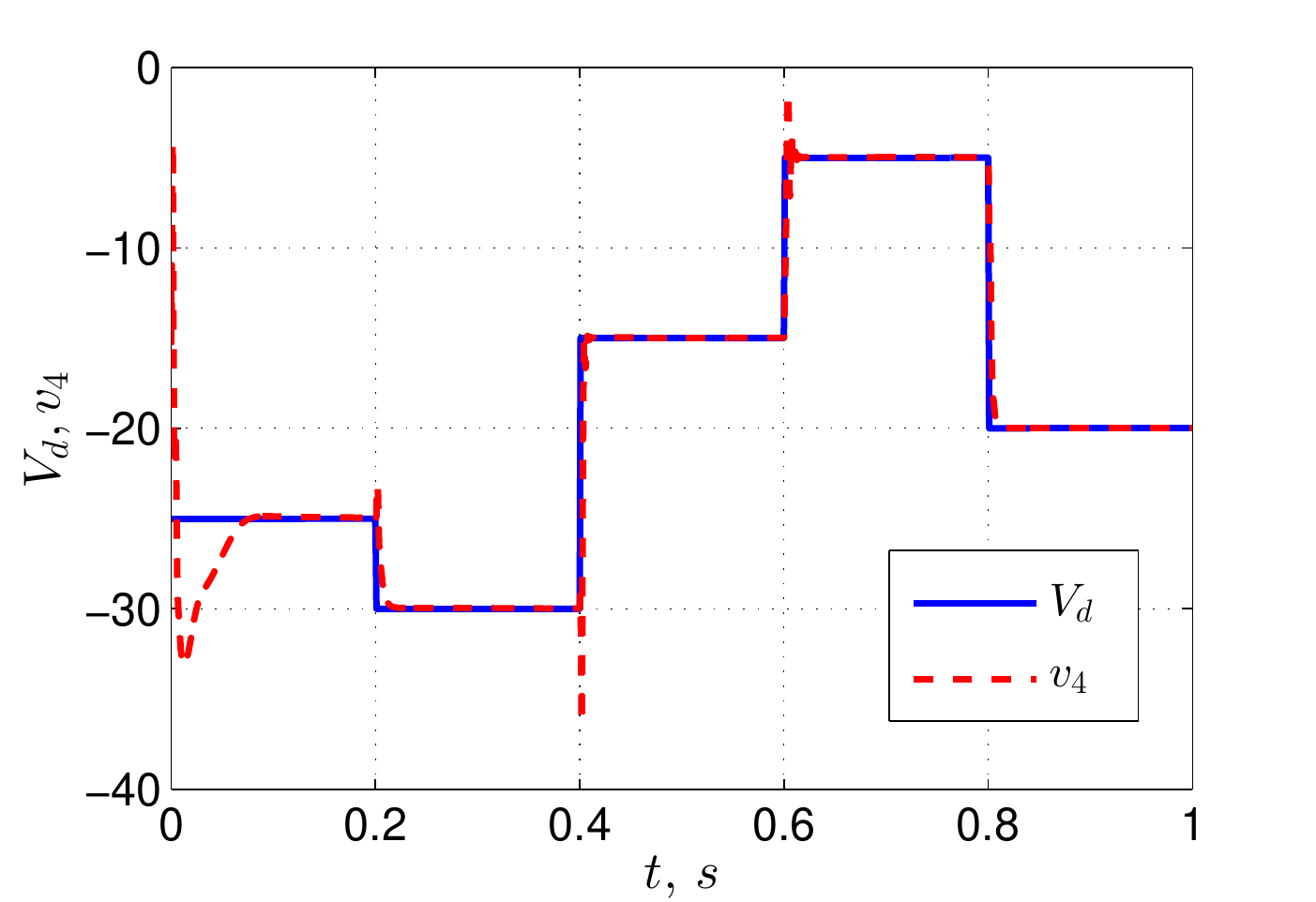}}
\
\subfloat[][]{{\label{fig_3d}}\includegraphics[width=0.43\textwidth]{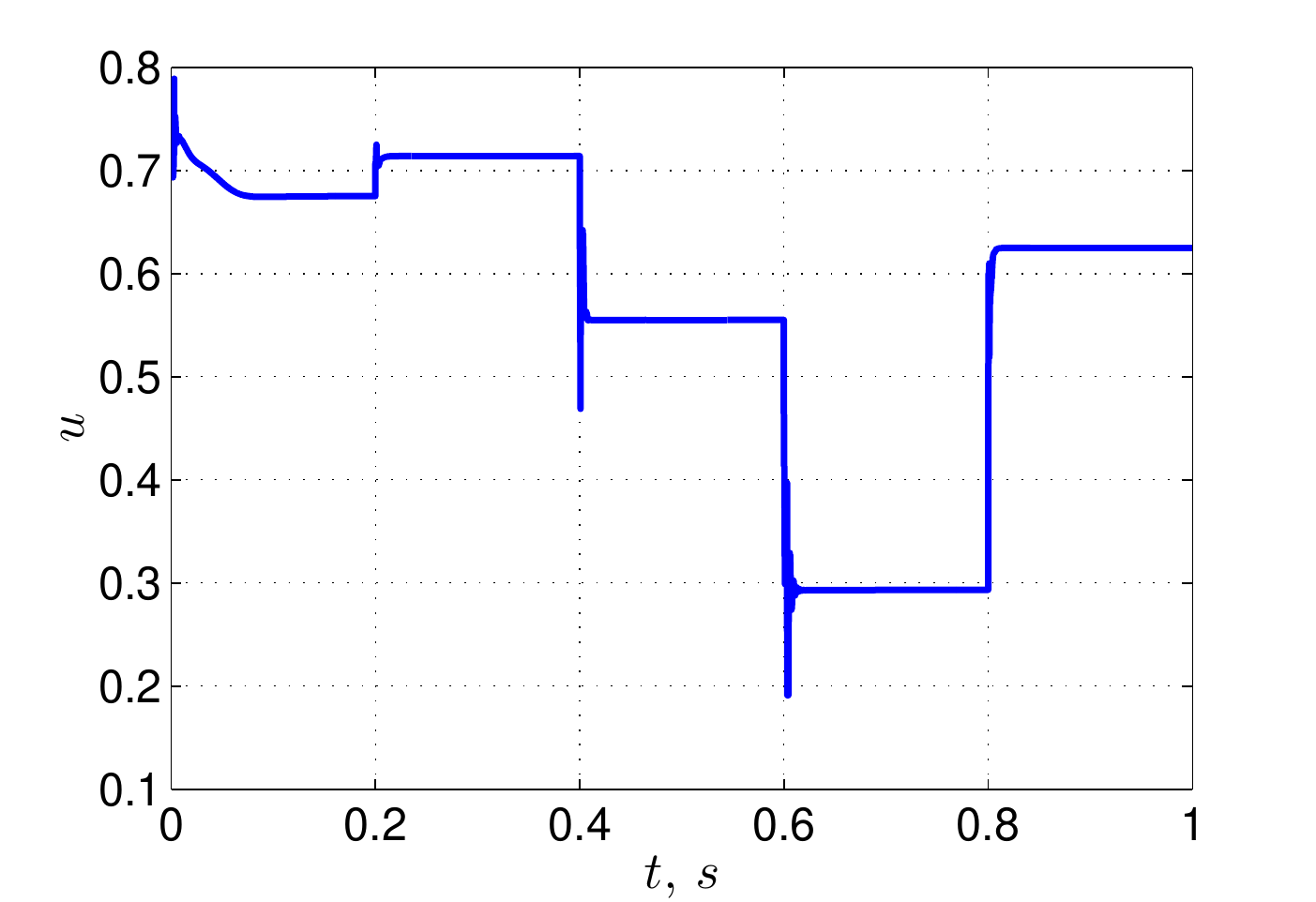}}
\caption{Transients of the observation errors (a) $\tilde x_1:=\hat x_1 - i_1$ , (b) $\tilde x_2:=\hat x_2 - i_3$, (c) the voltage reference   $V_d$ and voltage output $v_4$ and (d) the control input $u$ for  the tuning gains $\alpha=0.5$,  $\Gamma=0.001 I_2$.
}
\label{fig_gbo2}
\end{figure*}

\begin{figure*}[htp]
\centering
\subfloat[][]{{\label{fig_1a}}\includegraphics[width=0.43\textwidth]{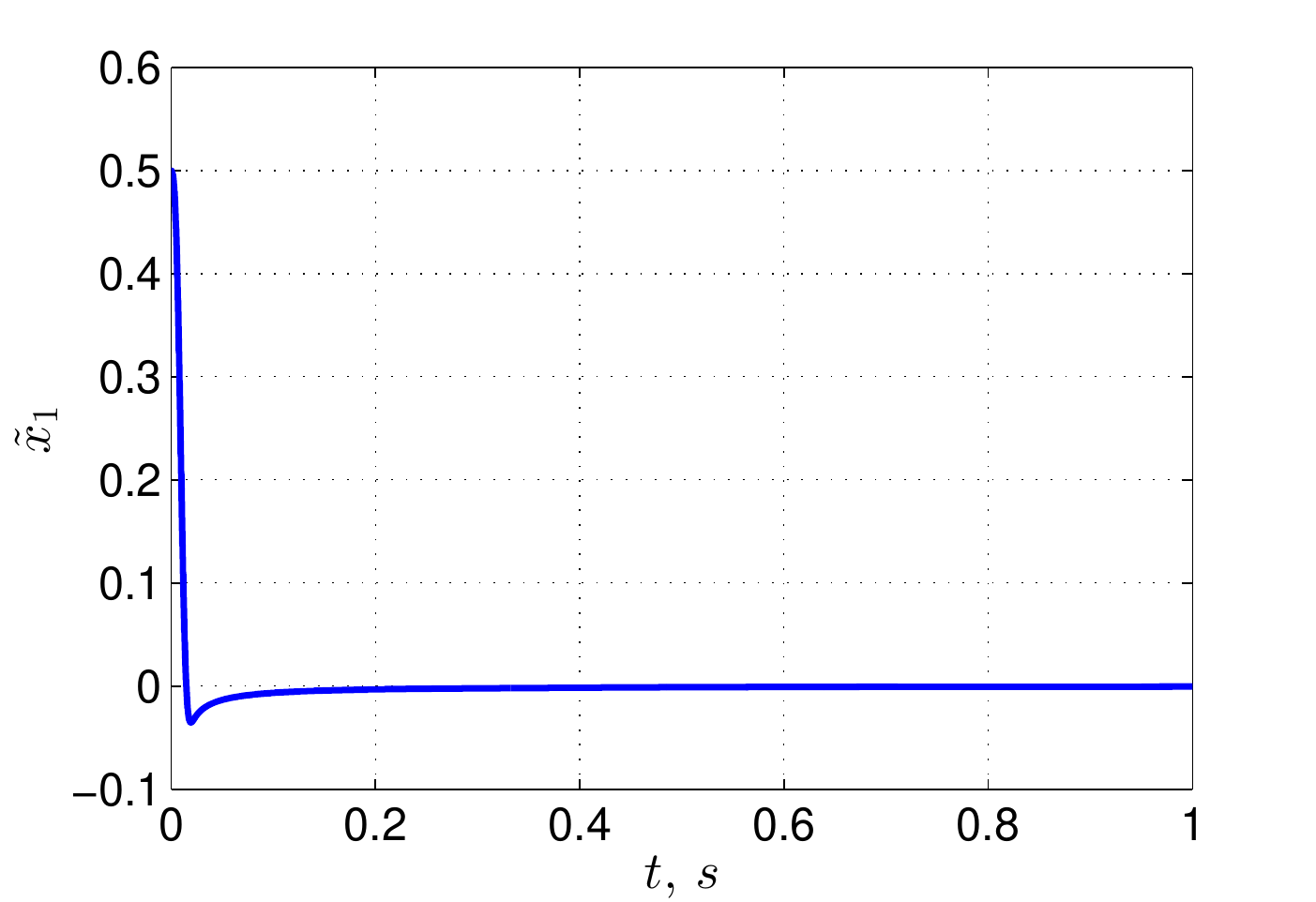}}
\
\subfloat[][]{{\label{fig_1b}}\includegraphics[width=0.43\textwidth]{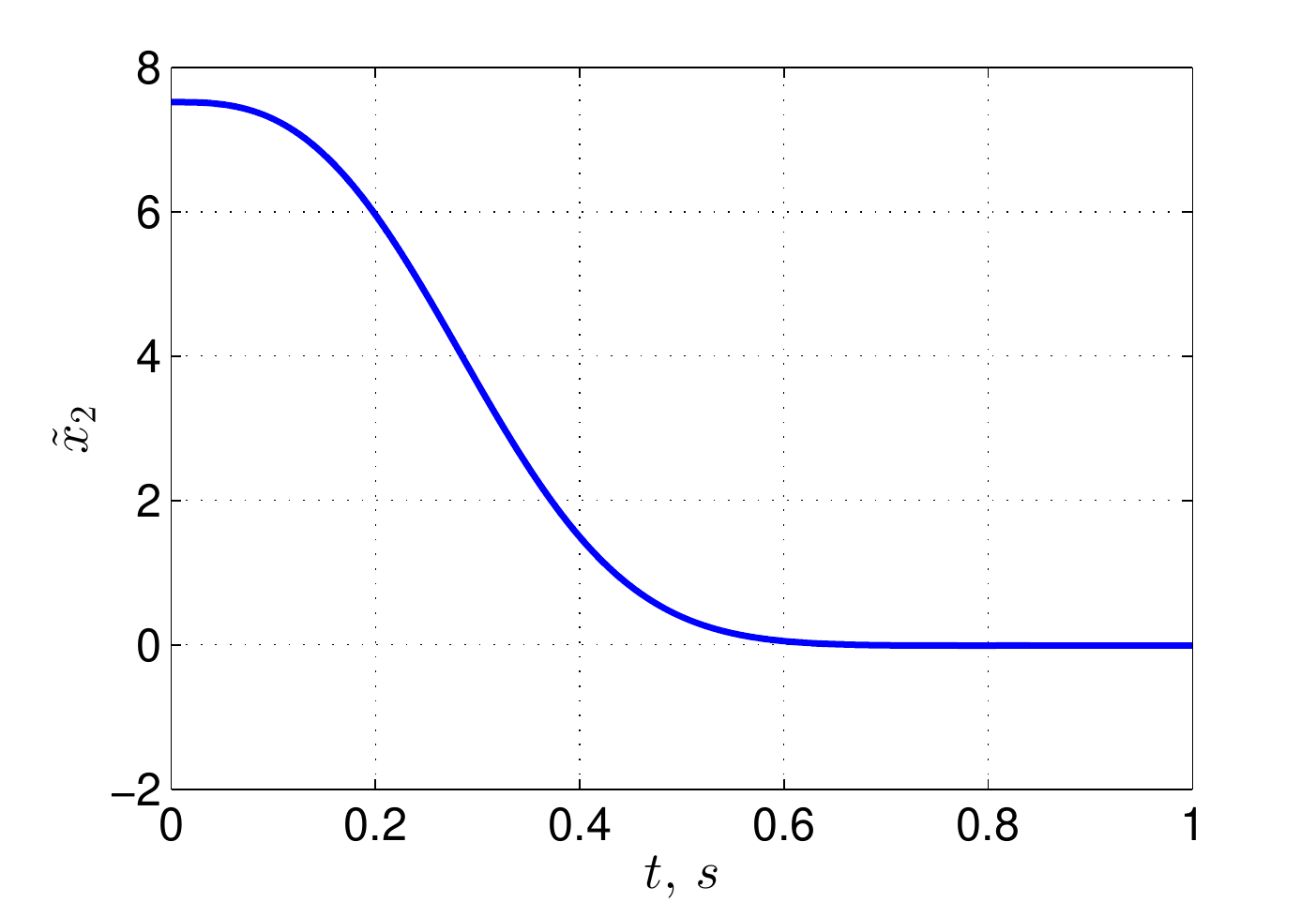}}
\\ \vspace{-3mm}
\subfloat[][]{{\label{fig_1c}}\includegraphics[width=0.43\textwidth]{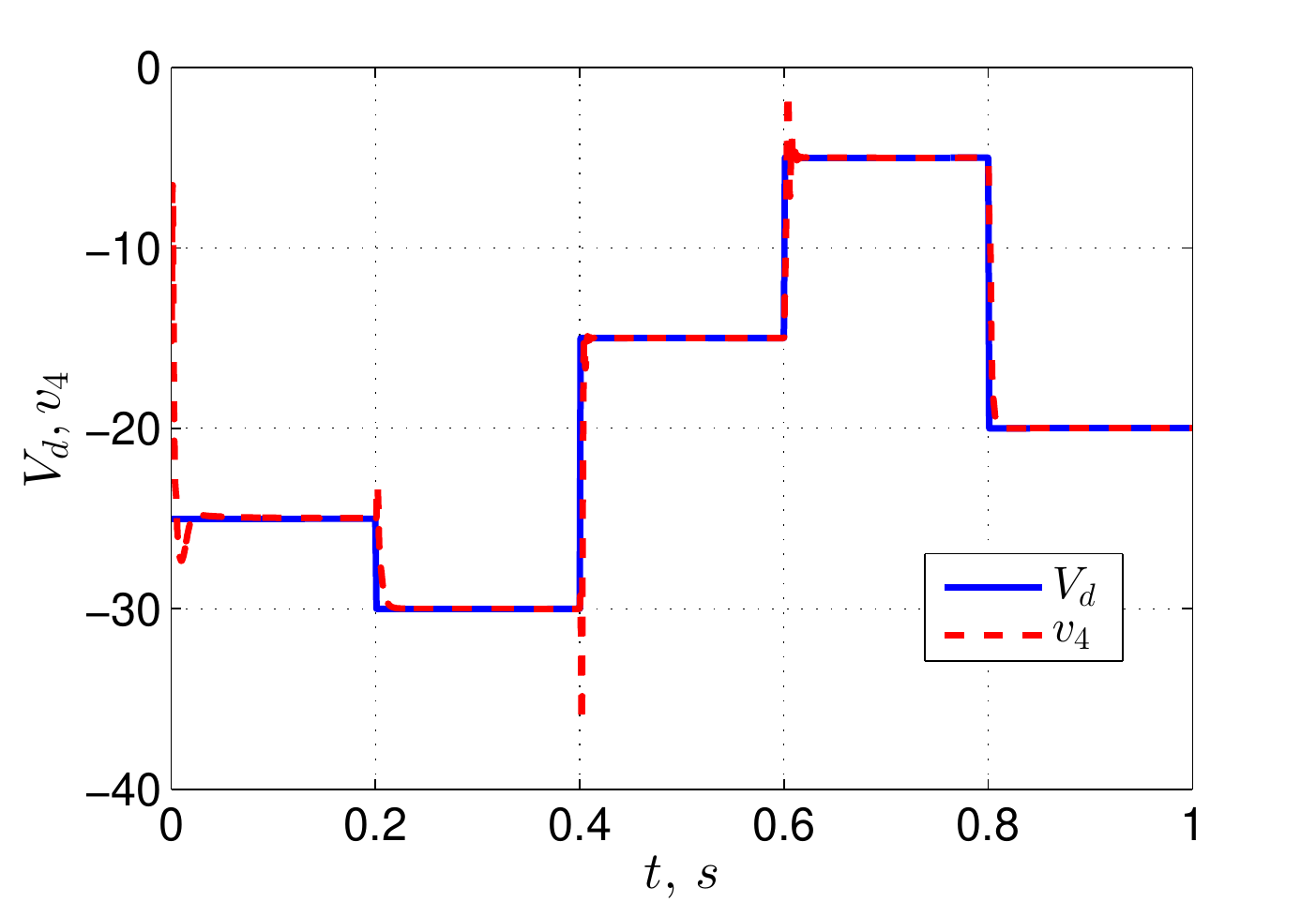}}
\
\subfloat[][]{{\label{fig_1d}}\includegraphics[width=0.43\textwidth]{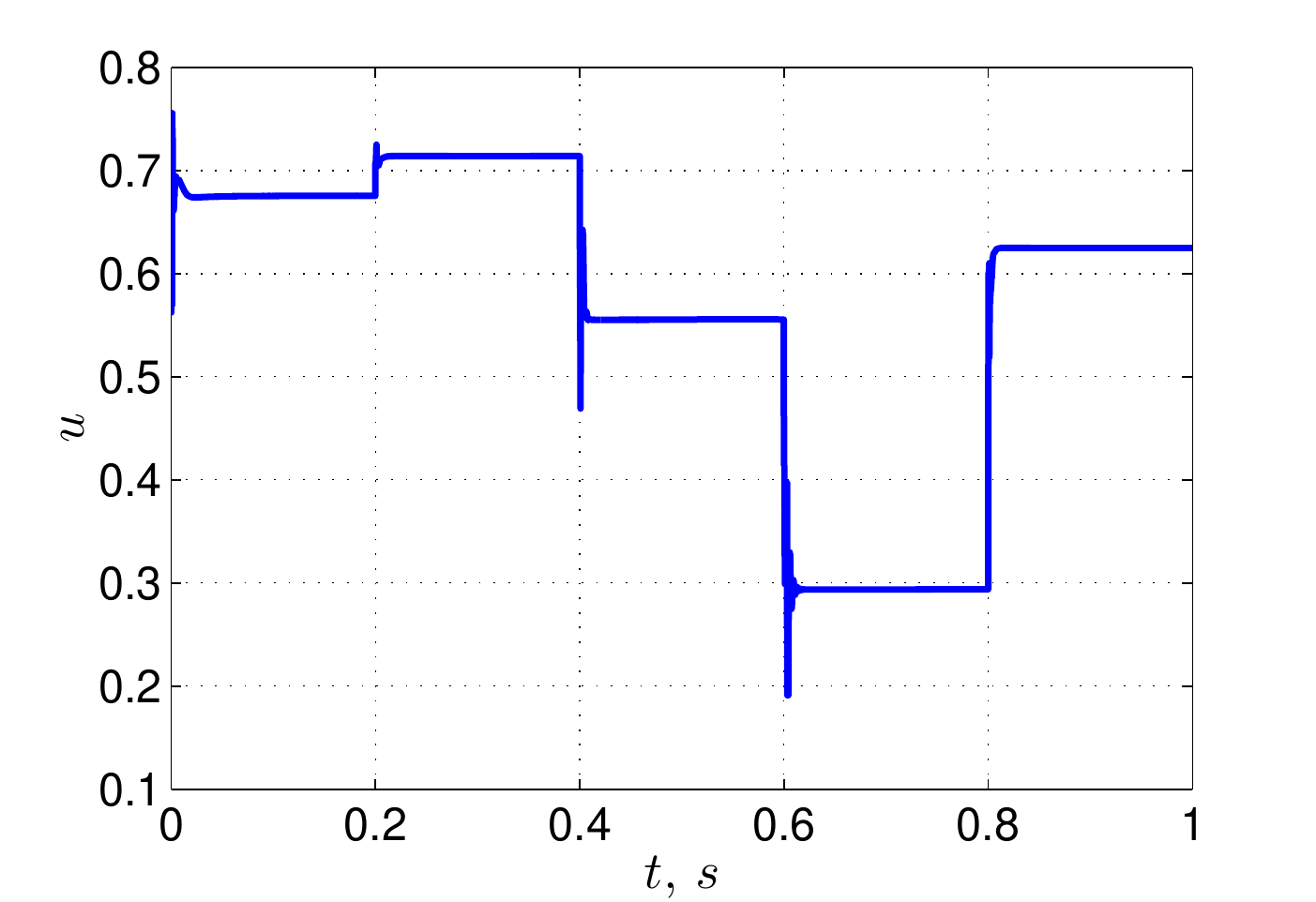}}
\caption{Transients of the observation errors (a) $\tilde x_1:=\hat x_1 - i_1$ , (b) $\tilde x_2:=\hat x_2 - v_4$, (c) the voltage reference   $V_d$ and voltage output $v_4$ and (d) the control input $u$ for  the proposed observer with tuning gains $\alpha=1$,  $\Gamma=\diag\{0.01,0.1\}$.
}
\label{fig_gbo1}
\end{figure*}
}

{
\begin{figure*}[htp]
\centering
\subfloat[][]{{\label{fig_1_2a}}\includegraphics[width=0.43\textwidth]{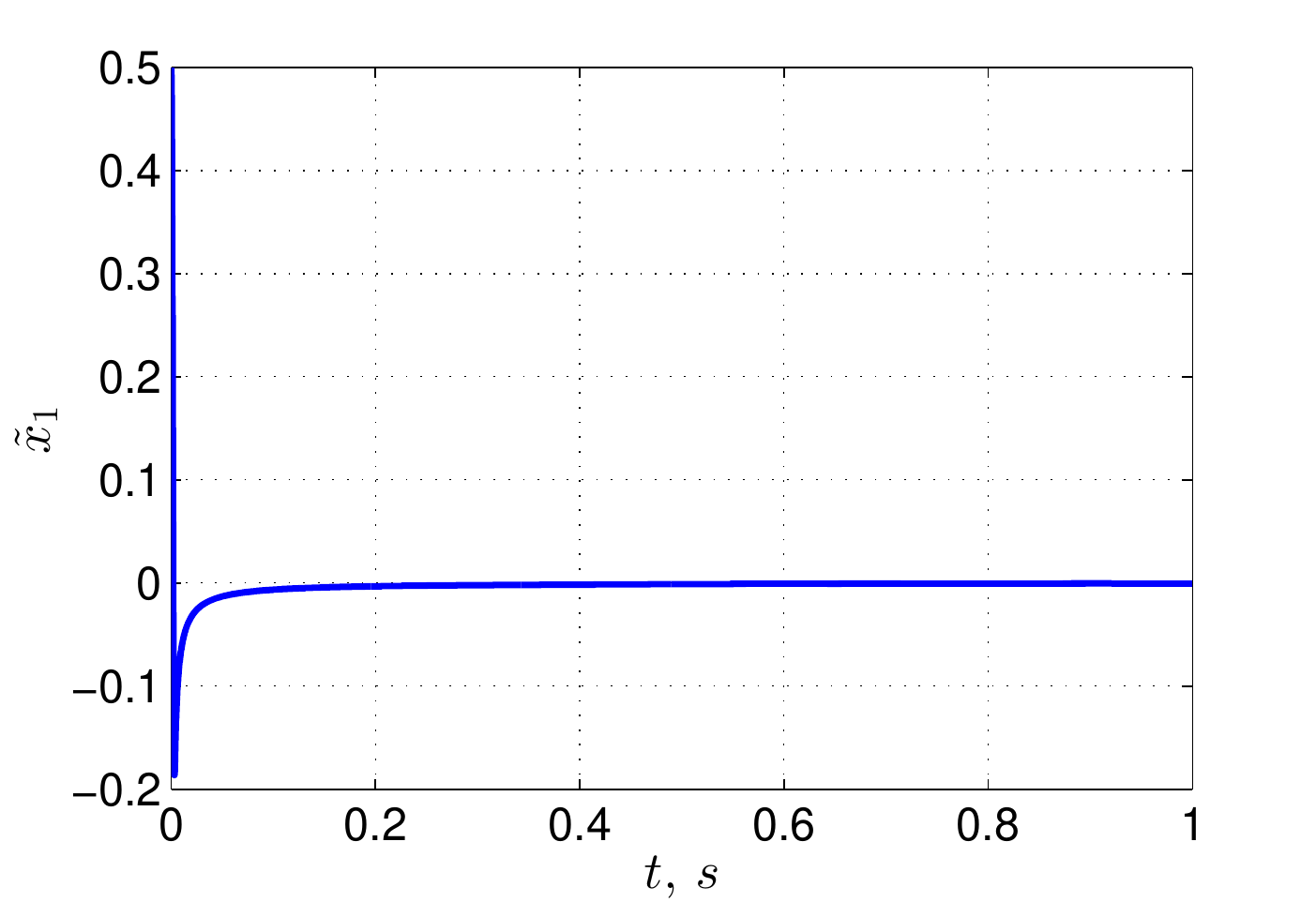}}
\
\subfloat[][]{{\label{fig_1_2b}}\includegraphics[width=0.43\textwidth]{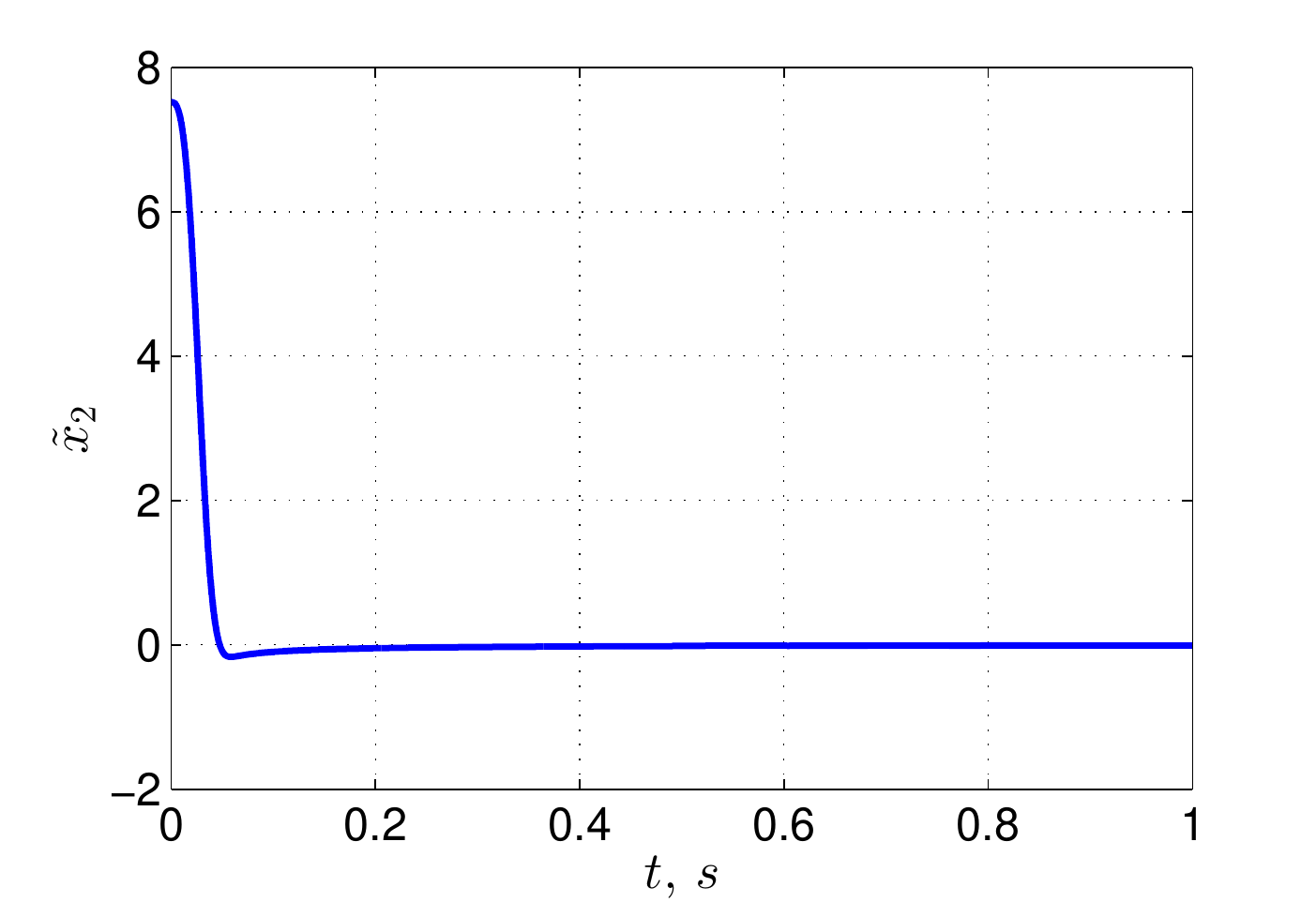}}
\\ \vspace{-3mm}
\subfloat[][]{{\label{fig_1_2c}}\includegraphics[width=0.43\textwidth]{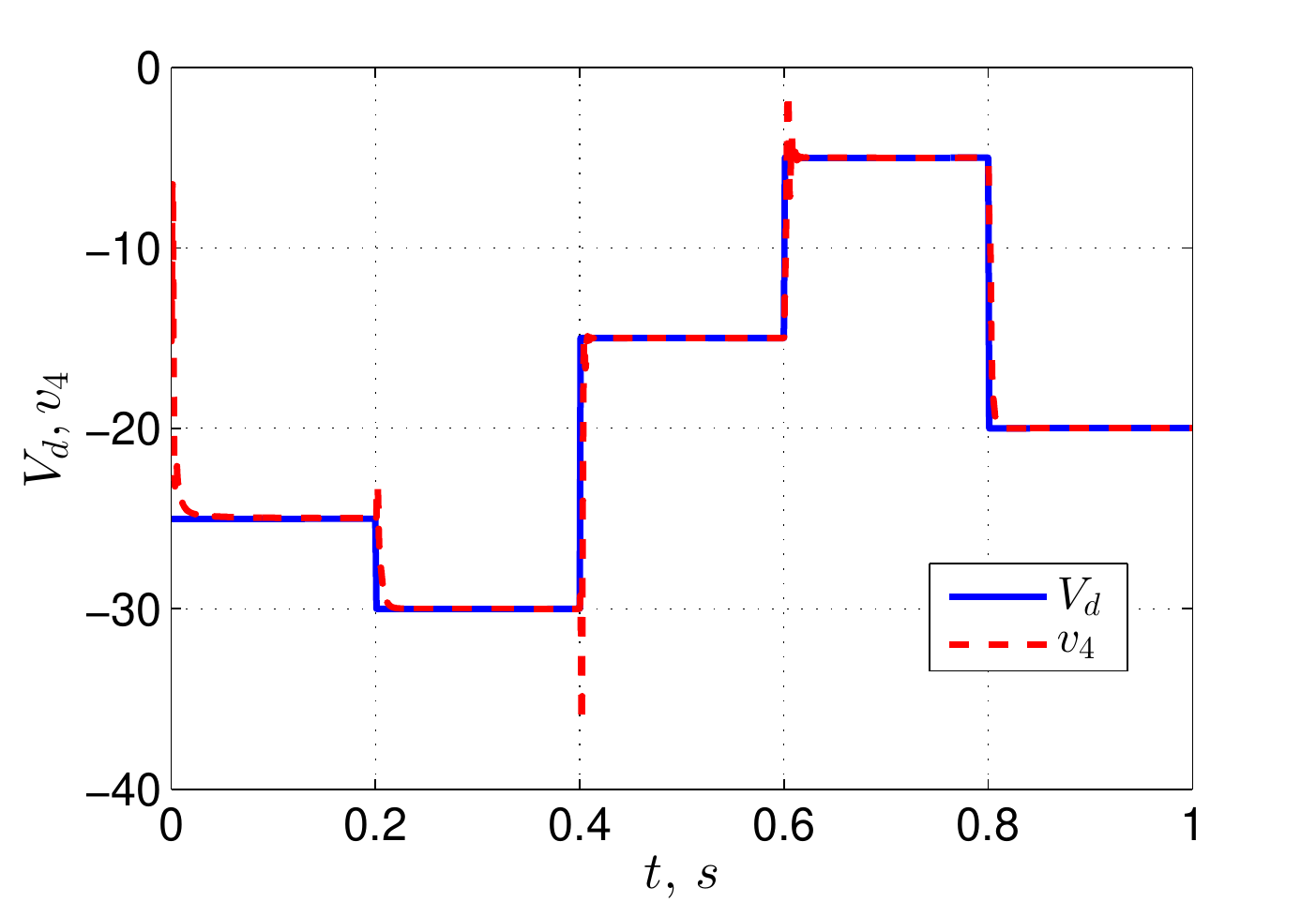}}
\
\subfloat[][]{{\label{fig_1_2d}}\includegraphics[width=0.43\textwidth]{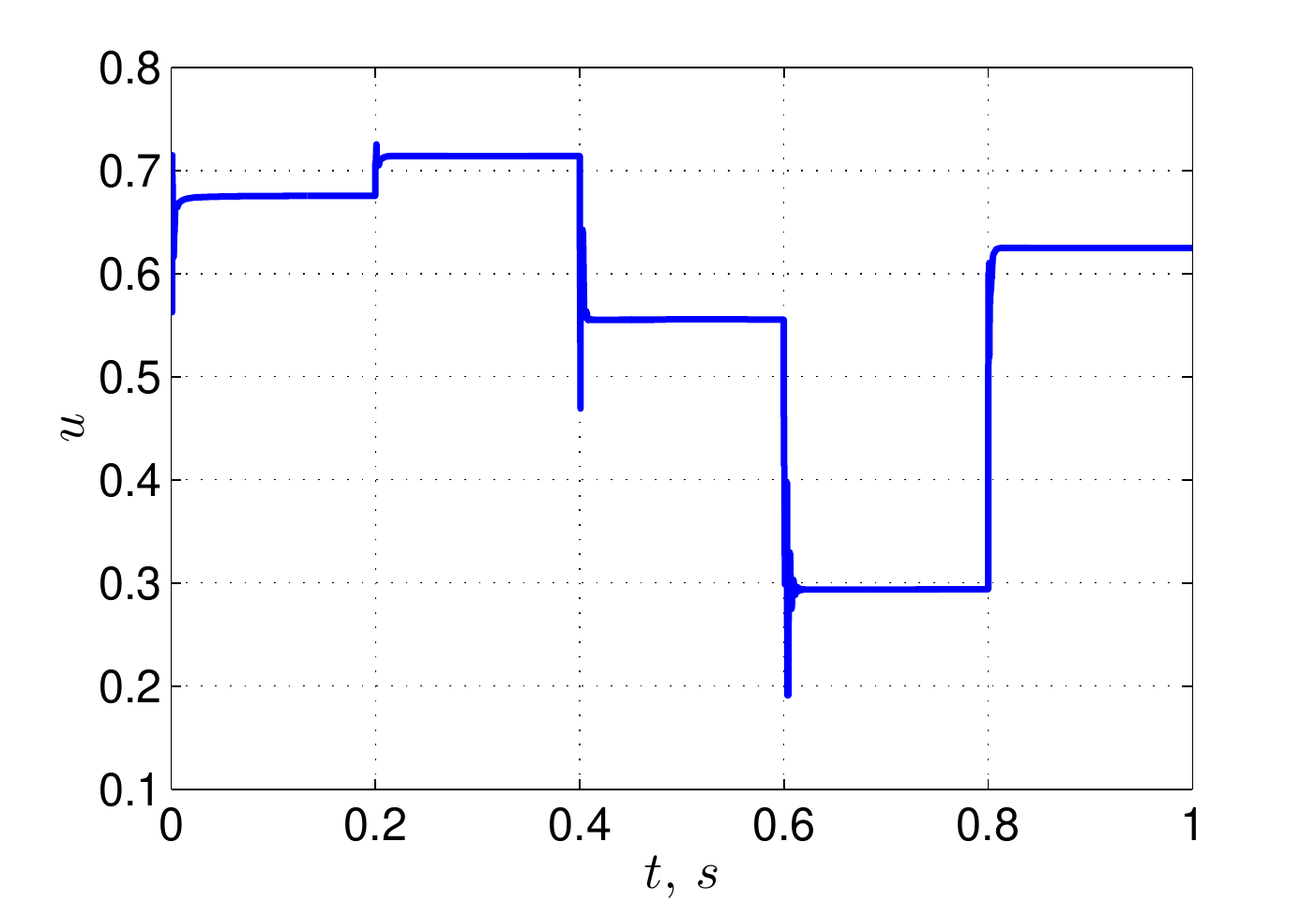}}
\caption{Transients of the observation errors (a) $\tilde x_1:=\hat x_1 - i_1$ , (b) $\tilde x_2:=\hat x_2 - v_4$, (c) the voltage reference   $V_d$ and voltage output $v_4$ and (d) the control input $u$ for  the proposed observer with tuning gains $\alpha=1$,  $\Gamma=\diag\{1,10\}$.
}	
\label{fig_gbo1_2}
\end{figure*}

\begin{figure*}[htp]
\centering
\subfloat[][]{{\label{fig_2a}}\includegraphics[width=0.43\textwidth]{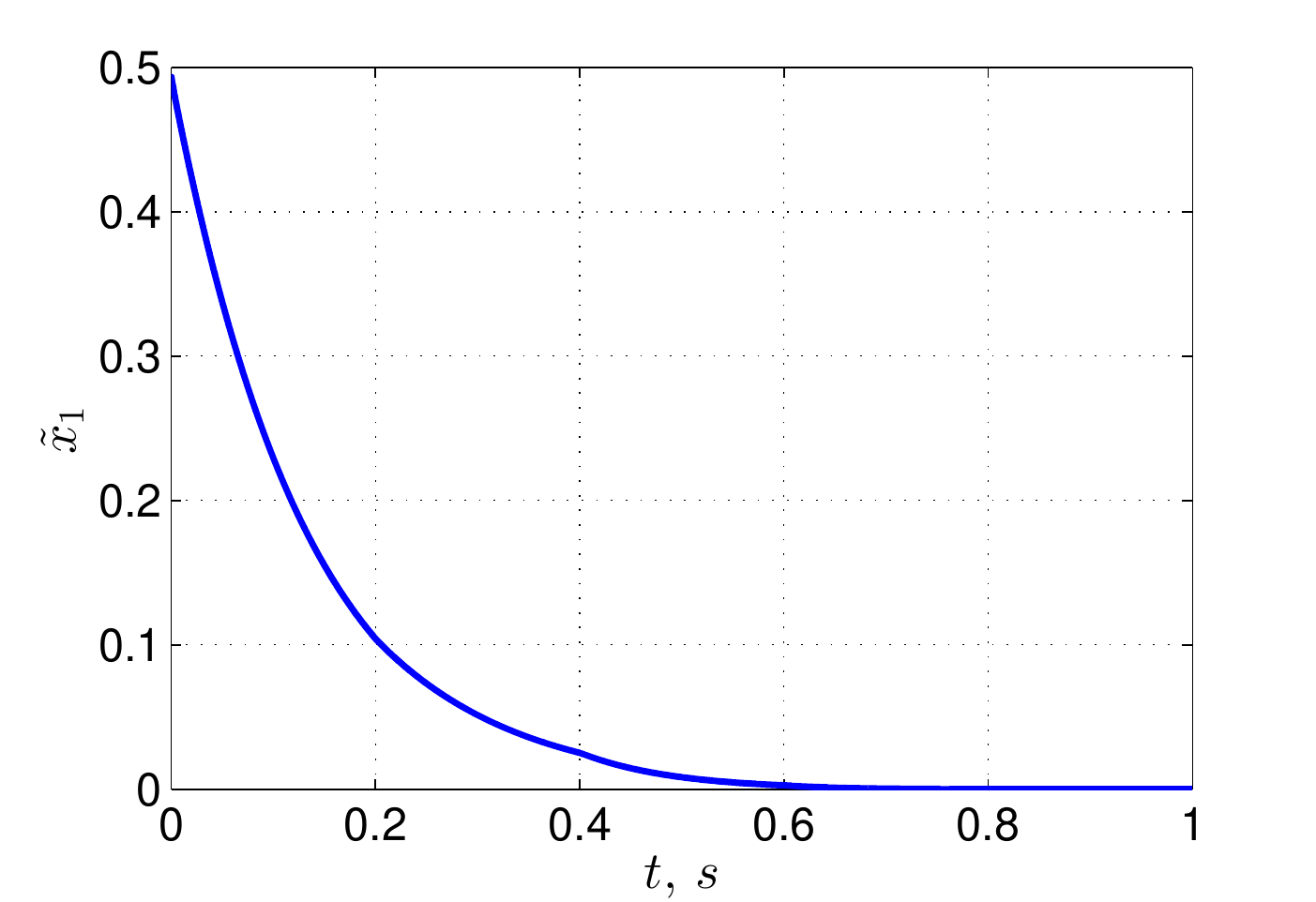}}
\
\subfloat[][]{{\label{fig_2b}}\includegraphics[width=0.43\textwidth]{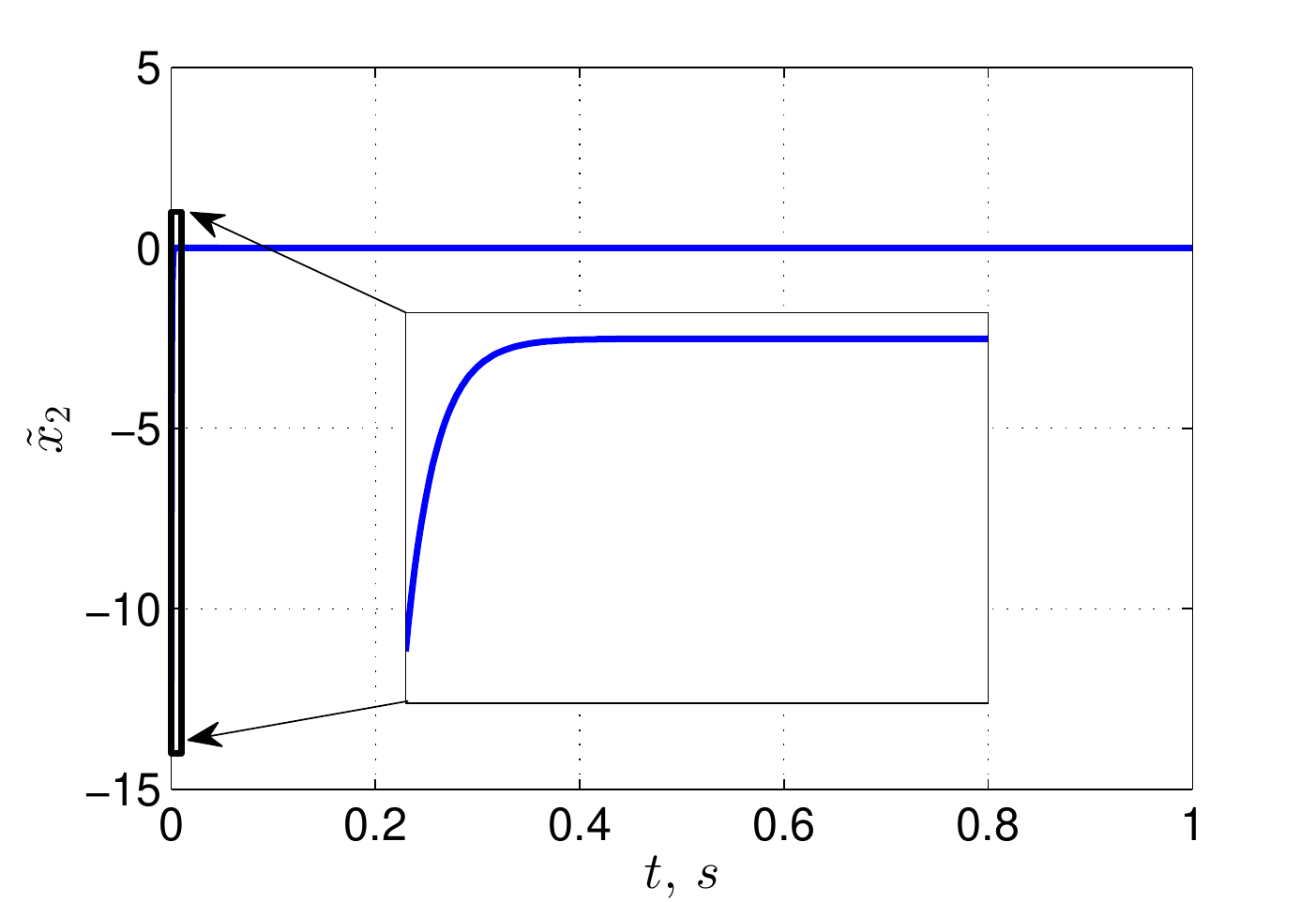}}
\\ \vspace{-3mm}
\subfloat[][]{{\label{fig_2c}}\includegraphics[width=0.43\textwidth]{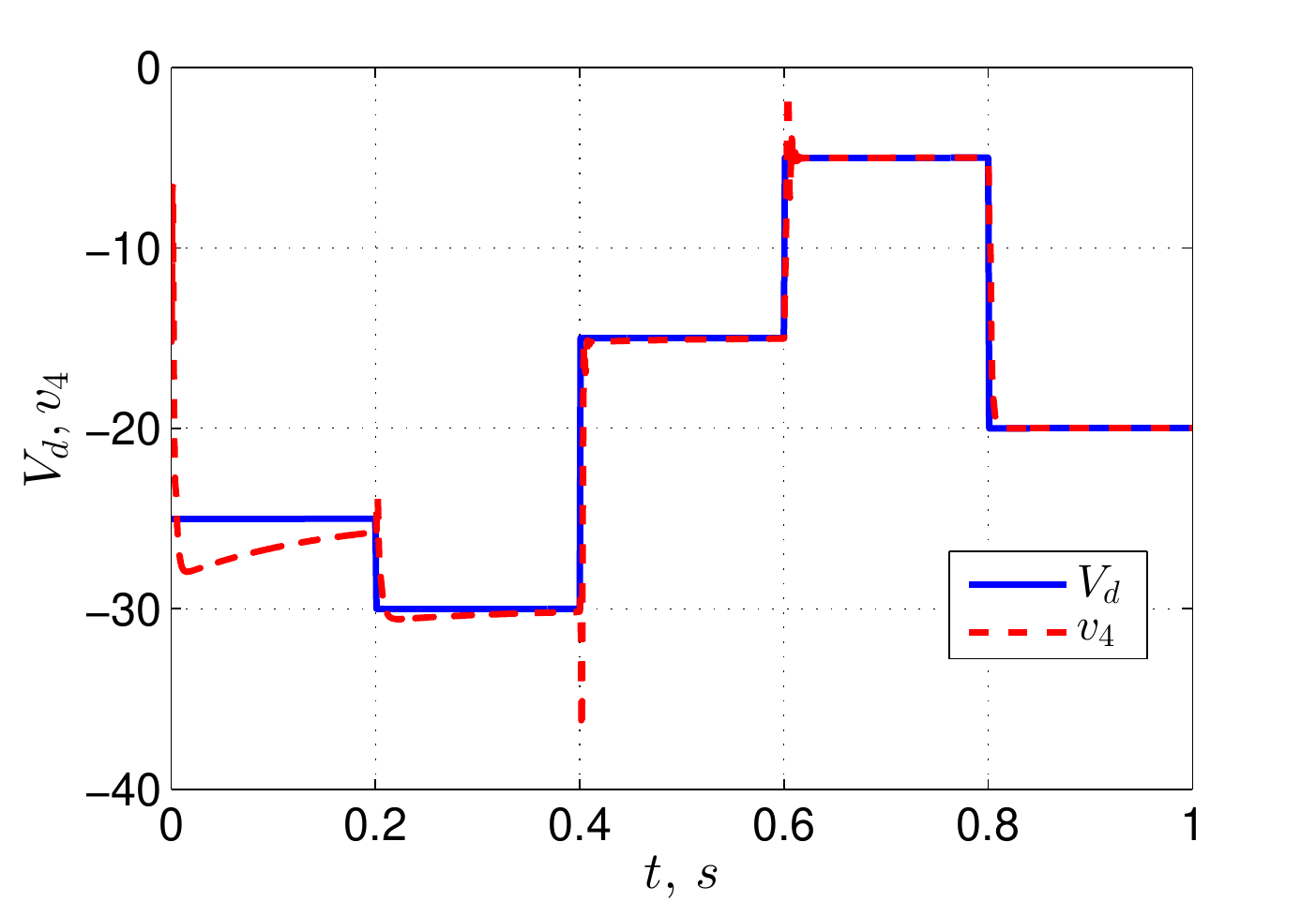}}
\
\subfloat[][]{{\label{fig_2d}}\includegraphics[width=0.43\textwidth]{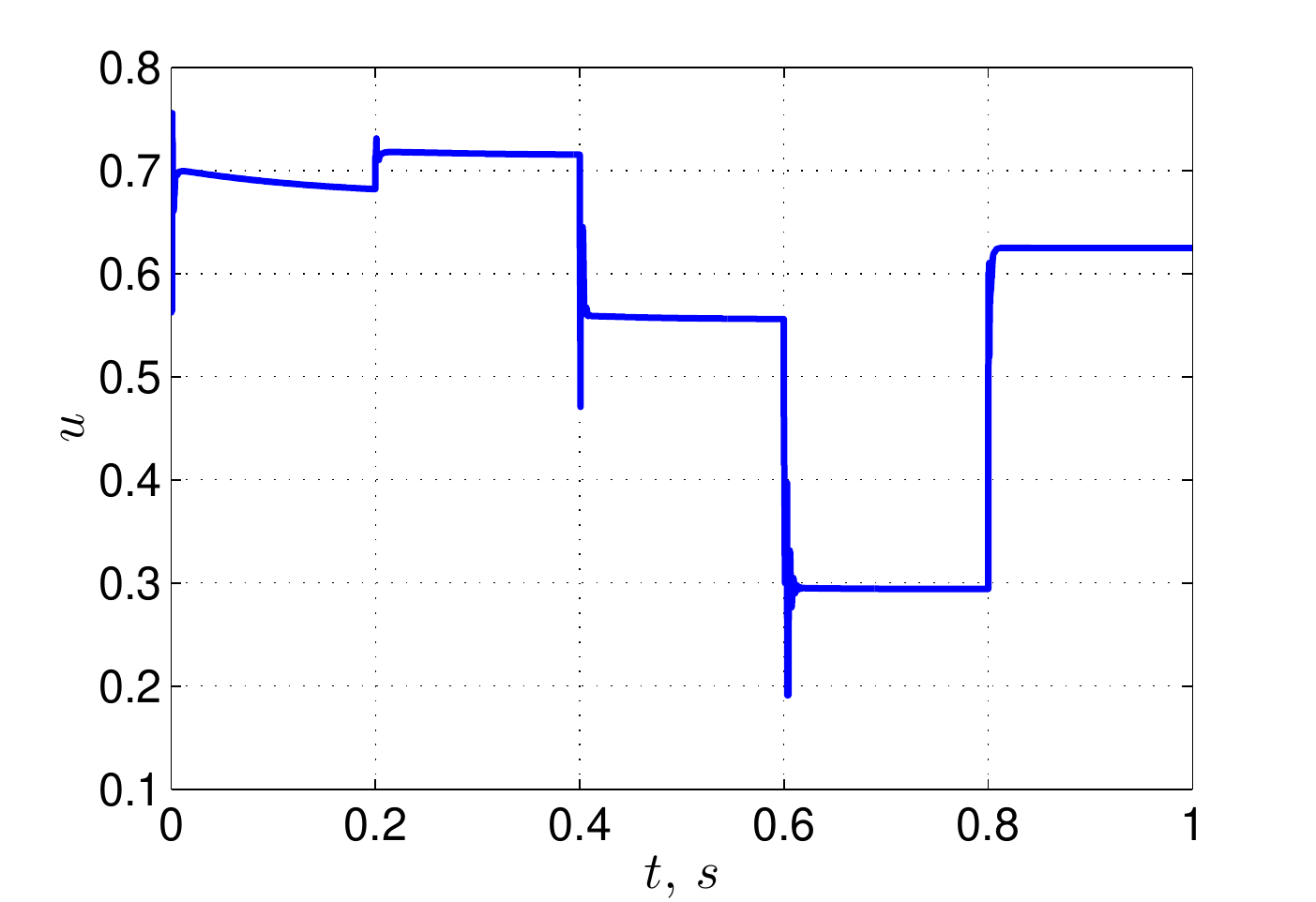}}
\caption{Transients of the observation errors (a) $\tilde x_1$ , (b) $\tilde x_2$, (c) the voltage reference   $V_d$ and voltage output $v_4$ and (d) the control input $u$ for  the I$\&$I observer \eqref{I&I} with tuning gains   $\gamma_1=25$, $\gamma_2=1$.
}
\label{fig_ii1}
\end{figure*}
}

\newpage

\begcen
{\bf \Large Acknowledgement}\\
\endcen
\noindent The first author thanks Laurent Praly for some important clarifications on Assumption \ref{ass1} and Vincent Andrieu for pointing out an error regarding the solvability of the PDE \eqref{pde} found in a previous version of the paper. 
%The authors are also grateful to two anonymous reviewers for their insightful remarks.

This article is supported by the Ministry of Education and Science of Russian Federation (project 14.Z50.31.0031).

%
%%%%%%%%%%%%%%%%%%%%%%%%%

\end{document}